\newcommand{\pushright}[1]{\ifmeasuring@#1\else\omit\hfill$\displaystyle#1$\fi\ignorespaces}
\newcommand{\pushleft}[1]{\ifmeasuring@#1\else\omit$\displaystyle#1$\hfill\fi\ignorespaces}
\providecommand{\keywords}[1]{\textbf{{Keywords --}} #1}
\setlist[itemize]{leftmargin=*}
\newcounter{thm_count}
\theoremstyle{remark} 
\newtheorem*{theorem*}{Theorem} 
\newtheorem{theorem}[thm_count]{\bf Theorem} 
\newcounter{assump}
\newcounter{defn}
\newcounter{expl}
\newcounter{rmrk}
\theoremstyle{remark}
\newtheorem{assumption}[assump]{\bf Assumption}
\newtheorem{definition}[defn]{\bf Definition}
\newtheorem{remark}[rmrk]{\bf Remark}
\newtheorem{example}[expl]{\bf Example}
\renewenvironment{proof}[1][Proof:]{\begin{trivlist}
\item[\hskip \labelsep {\bfseries #1}]}{\end{trivlist}}
\title{{Private Learning on Networks} \footnote{This research is supported in part by National Science Foundation awards 1421918 and 1610543, and Toyota InfoTechnology Center. Any opinions, findings, and conclusions or recommendations expressed here are those of the authors and do not necessarily reflect the views of the funding agencies or the U.S. government.}}
\author{Shripad Gade $\qquad$ Nitin H. Vaidya \\ \\ Department of Electrical and Computer Engineering, and \\ Coordinated Science Laboratory, \\ University of Illinois at Urbana-Champaign. \\ Email: \{gade3, nhv\}@illinois.edu \\ \\ Technical Report}
\date{}
\begin{document}

\maketitle

\begin{abstract}
Continual data collection and widespread deployment of machine learning algorithms, particularly the distributed variants, have raised new privacy challenges. In a distributed machine learning scenario, the dataset is stored among several machines and they solve a distributed optimization problem to collectively learn the underlying model. We present a secure multiparty computation inspired privacy preserving distributed algorithm for optimizing a convex function consisting of several possibly non-convex functions. Each individual objective function is privately stored with an agent while the agents communicate model parameters with neighbor machines connected in a network. We show that our algorithm can correctly optimize the overall objective function and learn the underlying model accurately. We further prove that under a vertex connectivity condition on the topology, our algorithm preserves privacy of individual objective functions. We establish limits on the what a coalition of adversaries can learn by observing the messages and states shared over a network.
\end{abstract}

\keywords{Distributed optimization, privacy preservation, distributed learning, networks.}

\section{Introduction}

Advances in fast machine learning algorithms have resulted in widespread deployment of machine learning algorithms \cite{abadi2016deep}. Distributed learning and inference have become popular due to their inherent efficiency, scalability, robustness and geo-distributed nature of datasets \cite{boyd2011distributed,kraska2013mlbase,NIPS2014_5597,Li2013b,cano2016towards,Nedi2015a,gade16distoptclientserver}. Distributed learning reduces communication requirements of learning, since machines communicate/share updates (gradients or states) that are much smaller in size than datasets. Several distributed optimization algorithms have appeared in literature over the past decade \cite{nesterov2012efficiency,liu2015asynchronous,agarwal2011distributed,Singh2014,Nedic2007,nedic2011asynchronous,Nedi2015,zhang2014asynchronous,huang2015differentially,rabbat2004distributed,zhu2012distributed,gade16distoptclientserver,sra2012optimization}. Solutions to distributed optimization of convex functions have been proposed for myriad scenarios involving directed graphs \cite{Nedi2015}, link failures and losses \cite{hadjicostis2016robust}, asynchronous communication models \cite{nedic2011asynchronous,wei20131,zhang2014asynchronous}, stochastic objective functions  \cite{agarwal2011distributed,ram2010distributed,ram2009incremental}, fault tolerance \cite{su2016fault} and differential privacy \cite{huang2015differentially}.

Privacy has emerged to be one of the most critical challenges in machine learning \cite{shokri2015,abadi2016deep,abbe2012privacy,alizadeh2012grid,pasqualetti2012cyber}. For instance, in the healthcare industry, hospitals/insurance providers use medical records to learn predictive models estimating individual risk towards certain diseases. Data driven learning on DNA sequences and medical records, has allowed to predict individual predisposition to certain life threatening ailments, based solely on genetic factors. The genomic data and medical records are highly sensitive and need to be protected (they can be easily misused by malicious entities) \cite{shokri2015privacy}. Personalized recommendation systems is another application in which privacy is important. Specifically, predicting the next word a user will input (using language models trained on personal typing history e.g. SwiftKey \cite{SwiftKey}) or recommending pictures/videos based on viewing history are examples where sensitive personal preferences are used for training models, and may threaten privacy \cite{mcmahan2016federated}. Corporations use data based analytics to learn models for everything ranging from product demand-supply schedules \cite{hippert2001neural}, pricing, to marketing/advertising strategies using consumer data. Collaborative shipping and transportation of goods among sellers can be modeled as a linear programming problem, where, participating sellers would want to protect privacy \cite{vaidya2009privacy,hong2016privacy}. Financial fraud detection, transaction authentication, stock prediction, and credit risk etc can be learned from large financial databases available with banks and credit bureaus. Roboticists are interested in looking at problems like coverage \cite{cortes2005spatially}, rendezvous \cite{ren2005survey}, search and tracking \cite{olfati2007distributed,gade2013heterogeneous}, herding \cite{gade2015herding} etc. that involve use of multiple, collaborative robots. Robots location, internal states and observations may be extremely sensitive (based on the application) and one would ideally want robots to cooperatively solve problems, without sharing any of these pieces of private information. All of these critical machine learning applications utilize datasets that contain personal and often times extremely sensitive information. In this report we address a fundamental question - \textit{``How do we accurately learn underlying models without leaking any private information?"}

Current privacy preserving methods can be broadly classified into cryptographic approaches and non-cryptographic approaches \cite{weeraddana2013per}. Cryptographic approaches as the name suggests, use cryptographic techniques and have been extensively studied in literature \cite{weeraddana2013per,weeraddana2012privacy,goldreich2009foundations,goldwasser1997multi,pinkas2002cryptographic,hong2016privacy,bednarz2012methods}. Cryptography based privacy preserving techniques typically provide better security, however, are computationally expensive and inefficient \cite{bednarz2012methods}. Cryptographic methods are also vulnerable to attacks that involve stealing of encryption keys \cite{weeraddana2013per}. 

Several types of non-cryptographic approaches have gained popularity in recent years. \emph{$\epsilon$-differential privacy} is a popular probabilistic technique that involves use of randomized perturbations. Differential privacy aims to maximize accuracy of the queries made to a statistical database while minimizing the probability of information leakage \cite{huang2015differentially,7431982,nozari2015differentially,abadi2016deep}. Differential privacy methods suffer from a fundamental trade-off between the accuracy of the solution and the privacy margin (parameter $\epsilon$). \emph{Transformation techniques} involve concocting a new problem via algebraic transformations such that the solution of the new problem is the same as the solution of the old problem. This enables agents to conceal private problem data (of the original problem) effectively while the quality of solution is preserved \cite{weeraddana2013per,weeraddana2012privacy}. Scaling, translation, affine transformation and certain nonlinear transformations are some of the algebraic transformation techniques that have been used in literature \cite{mangasarian2012privacy,wang2011secure,dreier2011practical,weeraddana2012privacy,weeraddana2013per}. 

Lou \emph{et.al} in \cite{lou2015privacy} claim that unconstrained, synchronous, distributed optimization protocols are generally not privacy preserving and that asynchronous updates and projection sets (constrained optimization) can be used to introduced privacy. We, however, show in Example~\ref{Ex:SimAtt}, that projection sets do not provide privacy protection from adversaries\footnote{We show that, in a system implementing projected gradient descent protocol from \cite{ram2010distributed} or \cite{gade16convsum}, a strong adversary is successful in uncovering private objective functions.}. We also conjecture, that if adversary can additionally observe and keep track of state updates then asynchronous nature of algorithm may not be effective in providing any additional privacy.  

Problem structure or the learning system architecture can be further exploited to improve privacy. Preliminary ideas on such strategies, viz. function partitioning (splitting) and function sharing are reported in our previous report \cite{gade16convsum}. We also show that structured randomization of gradient updates can improve privacy in client-server architecture (multiple parameter servers and multiple clients) in our prior report \cite{gade16distoptclientserver}. In this report, we present the function sharing approach to privacy preservation followed by convergence (correctness) and privacy results. The objective of this work is to introduce and quantify privacy guarantees in distributed machine learning.

\subsection{Contributions}
The main contributions of the report are threefold. 
\begin{enumerate}
\item We develop a privacy preserving distributed optimization algorithm using a function sharing strategy. Our novel algorithm is easy to use and computationally inexpensive.
\item We present and rigorously prove that the algorithm accurately optimizes the objective function while ensuring privacy of individual objective functions. 
\item The strategy presented here (and ideas in \cite{gade16convsum}), allows us to establish crafty privacy preserving techniques, that exploit the structure of distributed learning architectures. We believe that such methods are practical for introducing privacy in distributed learning. 
\end{enumerate}

\subsection{Organization}
Problem formulation, adversary model and privacy definitions are presented in Section~\ref{Sec:ProbForm}. Our privacy preserving distributed optimization protocol in presented and detailed in Section~\ref{Sec:Algo}. Convergence and privacy guarantees are presented and proved in Section~\ref{Sec:ConvPrivProof}. Simulation results are presented in Section~\ref{Sec:Results}.

\subsection{Notation}
Let the set of agents (also referred to as nodes) be denoted by $\mathcal{V}$ and the number of agents be $|\mathcal{V}| = S$. We use the symbol ``$\sim$" to denote a directed communication link over which information sharing can occur between agents, e.g., $I \sim G$ denotes a directed communication link from agent $I$ to agent $G$. We define edge set as the set of all directed communication links $\mathcal{E} = \{(u,v) : u \in \mathcal{V}, v \in \mathcal{V} \text{ and } u \sim v \}$. We consider bidirected communication links, hence, edge $(v,u) \in \mathcal{E}$ whenever edge $(u,v) \in \mathcal{E}$. Communicating links between agents induce a graph $\mathcal{G} = (\mathcal{V}, \mathcal{E})$, defined by the set of all nodes (agents), $\mathcal{V}$ and the set of all edges (communication links), $\mathcal{E}$. The neighborhood set of agent $J$, the set of all agents that communicate with agent $J$, is denoted by $\mathcal{N}_J$. $\|.\|$ denotes Euclidean norm for vectors.

Every agent maintains an estimate of the model parameter vector (also referred to as iterate). Iterate stored in agent $I$ at iteration $k$ is denoted by $x^I_{k}$, where the superscript denotes the agent-id, the subscript denotes the time index. The average of iterates at time instant $k$ is denoted by $\bar{x}_{k}$ and the disagreement of an iterate ($x^J_{k}$) with the iterate average ($\bar{x}_{k}$) by $\delta^J_{k}$. 
\begin{equation}
\bar{x}_{k} = \frac{1}{S} \sum_{J=1}^S x^J_{k}, \qquad \delta^J_{k} = x^J_{k} - \bar{x}_{k}. \label{Eq:deltaDef}
\end{equation}
Agents also maintain an estimate of iterate average denoted by $v^J_k$, where the superscript represents the agent-id and the subscript denotes the iteration index.

\section{Problem Formulation} \label{Sec:ProbForm}
We consider a distributed optimization problem involving $S$ agents, each of whom has access to a private, possibly non-convex objective function $f_i(x)$. Agents intend to collectively solve the following optimization problem,
\begin{align}
\text{Find} \quad x^* \in \underset{x \in \mathcal{X}}{\text{argmin}} \; f(x), \label{Eq:OptProb}
\end{align}
where $\mathcal{X}$ is the feasible parameter set, and $f(x) \triangleq \sum_{i=1}^S f_i(x)$ is a convex objective function. The dimension of the problem (number of parameters in the decision vector, $x$) is denoted by $D$. We enforce the following assumption on the functions $f_i(x)$ and on the feasible parameter set $\mathcal{X}$.
\begin{assumption} [Objective Function and Decision Set] \label{Asmp:FunSet}
The individual objective functions $f_i(x)$ and the feasible parameter set $\mathcal{X}$ satisfy the following properties,
\begin{enumerate}[label=(\Alph*)]
\item  The objective functions $f_i : \mathbb{R}^D \rightarrow \mathbb{R}, \; \forall \; i = 1, 2,{ }\ldots, S$ are potentially non-convex functions of model parameter vector $x$. However, the sum of individual objective functions is necessarily convex, i.e., $f(x) := \sum_{i=1}^S f_i(x)$ is a convex function. \label{Asmp:Function} 
\item The feasible parameter vector set, $\mathcal{X}$, is a non-empty, convex, and compact subset of $\mathbb{R}^D$. \label{Asmp:Set}
\end{enumerate}
\end{assumption}
\noindent Following the above assumption, we will refer to the aggregate function, $f(x)$, as a convex aggregate of non-convex functions. 

We further make a boundedness assumption on the gradient of individual function $f_i(x)$ followed by an assumption on the Lipschitzness of gradients. 
\begin{assumption} [Gradient Boundedness and Lipschitzness] \label{Asmp:GradientCond} Let $g_h(x)$ denote the gradient of the function $f_h(x)$. The gradients $g_h(x)$ satisfy,
\begin{enumerate} [label=(\Alph*)]
\item The gradients are norm bounded, i.e. there exist scalars $L_1, L_2, \ldots, L_S $ such that, $\| g_h(x) \| \leq L_h; \; \forall \ h \;( = \{1, 2, \cdots, S\}) \; \text{and} \; \forall \ x \in \mathcal{X}$.\label{Asmp:SubBound}
\item Each function gradient ($g_h(x)$) is assumed to be Lipschitz continuous i.e. there exist scalars $N_h > 0$ such that, $\|g_h(x) - g_h(y) \| \leq N_h \| x -y\|$ for all $x \neq y$ ($x,y \in \mathcal{X}$) and $\forall \ h \; (= \{1, 2, \ldots, S\})$. \label{Asmp:GradLip}
\end{enumerate}
\end{assumption}

Agents communicate with their neighbors and share model parameter estimates. The communication graph $\mathcal{G}$ constitutes bidirectional links. $\mathcal{G}$ is assumed to be a connected graph. All agents are assumed to be synchronous and fault-free. All communication links are assumed to be reliable. Throughout this report, we will use the following definitions and notation regarding the set of all optima ($\mathcal{X}^*$) and the function value at optima ($f^*$),
$$ f^* = \inf_{x \in \mathcal{X}} f(x), \qquad \mathcal{X}^* = \{x \in \mathcal{X} | f(x) = f^*\}, \qquad dist(x, \mathcal{X}^*) = \inf_{x^* \in \mathcal{X}^*} \|x - x^*\|.$$ 
\noindent The optimal function value, at the solution of the optimization problem or the minimizing state vector is denoted by $x^*$, is denoted by $f^*$.

\subsection{Adversary Model} \label{Sec:AdvModel}
Various adversary models have been studied and employed in literature and they are broadly classified, as follows, based on the capabilities and intentions of the adversary:
\begin{itemize}
    \item Passive Adversary \cite{goldwasser1997multi} - Passive adversaries limit the interaction to eavesdropping on their own communication channels and storing evolution of observed states and other information. Passive adversaries follow default protocol as other ``good" participants\footnote{An agent that is not an adversary is referred to as ``good" agent.}. 
    \item Active Adversary (Byzantine Adversary \cite{goldwasser1997multi,lamport1982byzantine}) - Active adversary model empowers the adversary to send, tamper and delete parts of messages being exchanged. They may arbitrarily deviate from the default protocol. 
    \item Curious Adversary - Curious adversaries try to uncover private information of the system (e.g. individual objective functions $f_i(x)$) using information available with them. 
    \item Bounded Rationality - Bounded rationality adversaries have limited computational power and they cannot perform complex computations. Conversely, unbounded rationality models have also been proposed and there exist strategies like Shamir's secret sharing mechanism \cite{shamir1979share} that are secure against adversaries with unbounded rationality (information theoretic security). 
\end{itemize}

In this report, we model the adversary as \emph{Passive-Curious} (PC) entity\footnote{For the purposes of this report, we will assume that all adversaries are strong PC adversaries.}. An adversary (denoted by $\mathscr{A}$) is PC, if it records the evolution of the system states by eavesdropping and tries to uncover information that is private to other agents. We consider a very strong adversary in the sense that the adversary has access to a lot of information about the system, which is typically unavailable in a distributed setting. A strong adversary makes privacy preservation difficult and underlines the strength of our approach. Note that we consider privacy in synchronous setting and it further shows the competence of our approach\footnote{Privacy in synchronous executions is relatively challenging as compared to asynchronous executions \cite{li2015differentially}. This follows from the fact that in asynchronous executions the step sizes (and the update counts) are uncoordinated and it makes estimating gradients form the observed states difficult.}. The assumption of a strong and motivated adversary with complete knowledge of the system (states and the underlying network) is essential to ensure that we perform vulnerability analysis in the worst scenario, following \textit{Kerckhoffs's principle}.  

Adversary has access to all the states of the system (model parameters $x^I_k$ of all agents $I$ at all time instants $k$) and its own private objective function. It also has access to all incoming and outgoing secure communications from itself. The adversary also has an understanding of the network structure and topology. The adversary, being passive, is restricted to following the same distributed optimization protocol as other ``good" agents\footnote{Agent $I (\in \mathcal{V})$ is ``good" agent if $I \cancel{\in} \mathcal{A}$.}. The adversary, although curious, wants the system to correctly solve the optimization problem. 

An alliance of cooperating adversaries is called a coalition (denoted by $\mathcal{A}$). In this report we also show privacy guarantees against a coalition of $f$ PC adversaries ($|\mathcal{A}| = f$). Information available to any adversary $\mathscr{A} \in \mathcal{A}$, is shared instantaneously among the other coalition members. Hence, any adversary $\mathscr{A}$ can observe the evolution of all states, the network topology and any communication that is inbound and outbound from any of the adversaries. 

\subsection{Privacy Definitions} 
We define privacy as the inability of the adversary to uncover private objective functions. Distributed optimization protocols involve agents utilizing its own local gradients to perform state updates followed by sharing the state estimate with neighbors. Since, a PC adversary has access to both the states and the network topology, under distributed gradient descent protocols \cite{ram2010distributed}, a PC adversary can estimate states and local gradients at these states. An adversary, by employing numerical techniques like polynomial regression (polynomial interpolation, numerical integration etc.) on the history of state and gradients pair ($x, \nabla f_i(x)$), can estimate gradients of private objective functions private to agents. This does not exactly provide $f_i(x)$, since $\nabla f_i(x)$ is available only at discrete state values (and not known analytically) and there will be integration constants that cannot be resolved. However, an adversary can guess with good accuracy the form (shape) and general behavior of a private function. 

In this report we attempt to solve a more demanding privacy problem. If we assume that an adversary can in fact estimate the gradients accurately, it is easy to see that the adversary can uncover private objective functions. We provide a scheme that ensures that no adversary can guess the individual objective functions with any accuracy. Formally we first define the set of all admissible private objective functions $f_i(x)$ as,
\begin{definition}[Admissible Function Set] \label{Def:AdmFunSet}
The set of all possible objective functions $f_i(x)$ is called the admissible function set and is denoted by $\mathcal{F}$. 
\end{definition}
\noindent Let $\mathcal{F}^S$ denote the set of $S$ private objective functions (each associated to one of the $S$ agents and each belonging to set $\mathcal{F}$). Hence, $\mathcal{F}^S = \{(f_1(x), f_2(x), \ldots, f_S(x))\}$, where $f_i(x) \in \mathcal{F}$ for $i = \{1, 2, \ldots, S\}$. 

Following the definition of admissible function set, we use the inability of the adversary to accurately guess the function from set $\mathcal{F}^S$ as the basis for defining privacy. Formally, an optimization protocol is called privacy preserving if it satisfies the following definition of privacy.
\begin{definition}[Privacy] \label{Def:Privacy}
An optimization algorithm is said to be privacy preserving under a given adversary model, if the information observed by the adversaries is compatible with any set of functions i.e. any element of $\mathcal{F}^S$ (e.g. ($h_1(x), h_2(x), \ldots, h_S(x)$) $\in \mathcal{F}^S$), such that the sum of its functions is original aggregate function $f(x)$, i.e. $f(x) = \sum_{i=1}^S h_i(x)$. 
\end{definition}
\noindent Intuitively, we define privacy as the inability of a PC adversary (or a coalition) to reduce the ambiguity associated with any guessed objective function. In other words, even after observing the execution of the optimization protocol (states, gradients etc.), an adversary (or a coalition) finds any arbitrarily guessed candidate functions (belonging to $\mathcal{F}^S$ and such that its functions add up to original $f(x)$) to be equally likely of being the true objective function.
 
\subsubsection*{A Motivating Example}

In the following example we show that, in a standard distributed optimization problem being solved by a distributed protocol, a strong PC adversary can guess private objective functions merely by observing the state evolution.  

\begin{example} \label{Ex:SimAtt}
Let us consider a distributed optimization problem over a network with three nodes in a fully connected topology (Figure~\ref{Fig:F-0}). The private objective functions for each of the nodes are $f_1(x) = (x-1)^2, \ f_2(x) = (x-2)^2 + (x-2)^4 \text{ and } f_3(x) = (x-3)^4$. The system of agents tries to find an optima of $f(x) = f_1(x)+f_2(x)+f_3(x)$ using distributed optimization protocol (for deterministic objective functions) from \cite{gade16convsum} or \cite{ram2010distributed}. Agent 1 is a strong PC adversary (as shown in Figure~\ref{Fig:F-0} with two concentric circles in blue) with private objective function $f_1(x)$. The adversary has complete knowledge of state evolution of all states ($x^1_k, x^2_k, x^3_k$) and the network connectivity. Agent 1 can estimate both $(x^2_k, \nabla f_2(x^2_k))$ and $(x^3_k, \nabla f_3(x^3_k))$, since it is privy to the states, network structure and step sizes. The adversary can estimate the gradient function using polynomial interpolation or polynomial regression. Integrating the gradient provides the objective functions $f_2(x)$ and $f_3(x)$ with an ambiguity of a constant term. 

In this example we show a successful attack from a strong PC adversary. Distributed optimization protocol from \cite{gade16convsum} is executed for 300 iterations. The system of agents correctly solves the optimization problem. During the execution, adversary observes the state evolution and using its knowledge of the protocol and the network structure, it estimates gradient values at discrete iterates. Agent 1 uses least squares polynomial fitting (\texttt{numpy.polyfit}) to guess the gradient function based on estimated gradient and iterate pairs. This gradient function is further integrated to obtain the objective function. In this attack, the functions guessed by the adversary are tabulated in Table~\ref{Tab:AttackEx}. The adversary is successful in uncovering objective functions $f_2(x)$ and $f_3(x)$ (with an ambiguity in the constant term).
\begin{table}[h]
\centering
\begin{tabular}{ |l|l|l|l| }
\hline
\multicolumn{4}{ |c| }{\textbf{Problem 1}} \\
\hline
$\bm{f_i(x)}$ & \textbf{True Obj. Function}, $\bm{f_i(x)}$ & \textbf{Estimated Obj. Function}, $\bm{\tilde{f}_i(x)}$ & \textbf{Ambiguity} \\ \hline
$f_1(x)$ & $x^2 - 2x + 1$ & Known to Adversary & - \\ \hline
$f_2(x)$ & $x^4 - 8x^3 + 25x^2 -36 x + 20$ & $x^4 - 8x^3 +25x^2 -36x + C_2$ & $C_2$ \\ \hline
$f_3(x)$ & $x^4 - 12x^3 + 54x^2 - 108 x + 81$ & $x^4 - 12x^3 + 54x^2 -108x + C_3$ & $C_3$ \\ \hline
\end{tabular}
\caption{An example showing successful attack by a PC adversary.}
\label{Tab:AttackEx}
\end{table}
\end{example}

\begin{figure}[!b]
\begin{subfigure}{.48\textwidth}
  \centering
  \includegraphics[width=0.93\linewidth]{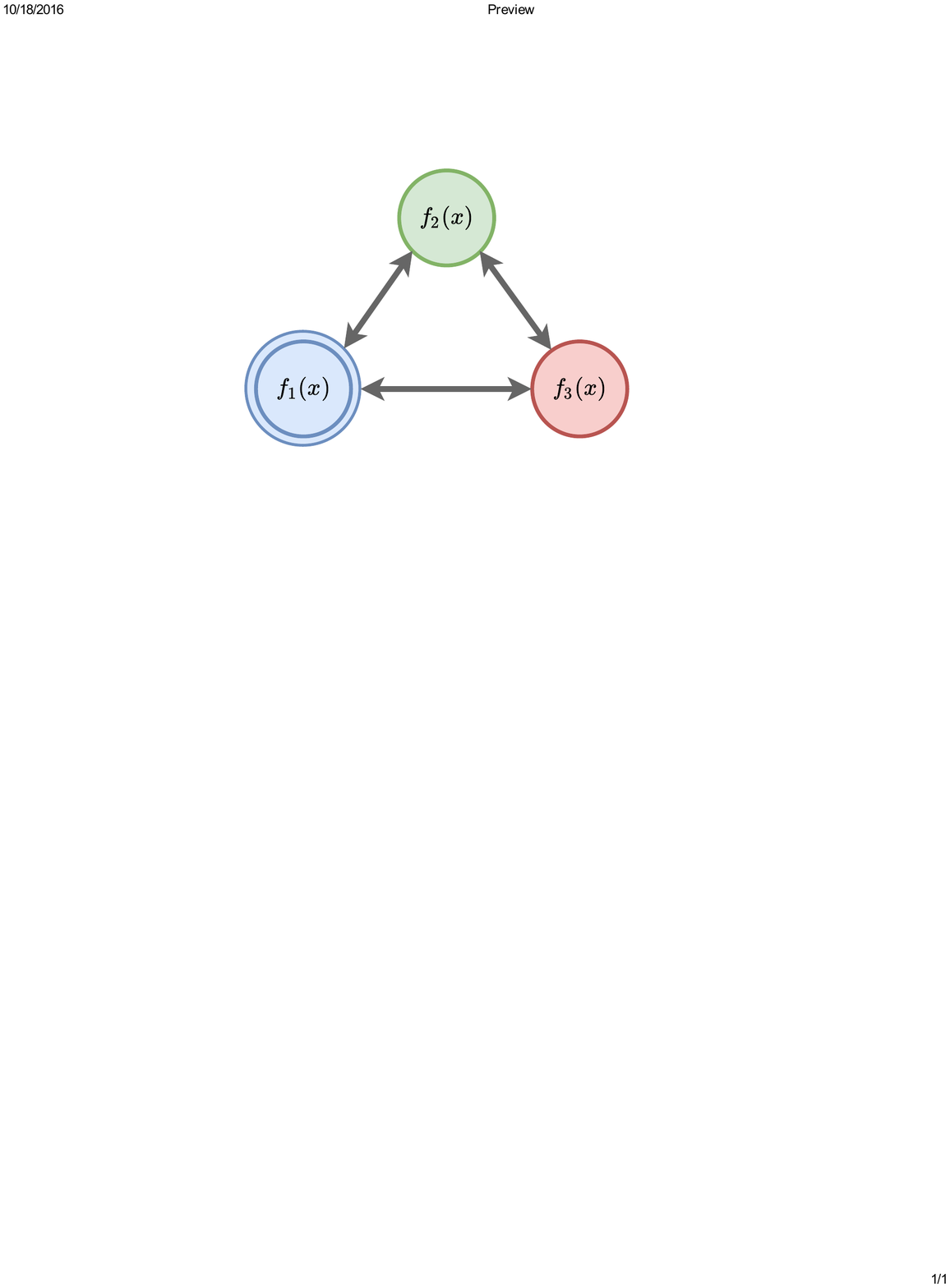}
  \caption{Adversary (Agent 1) is shown with two concentric circles (blue).}
  \label{Fig:F-0}
\end{subfigure} \hfill
\begin{subfigure}{.48\textwidth}
  \centering
  \includegraphics[width=0.95\linewidth]{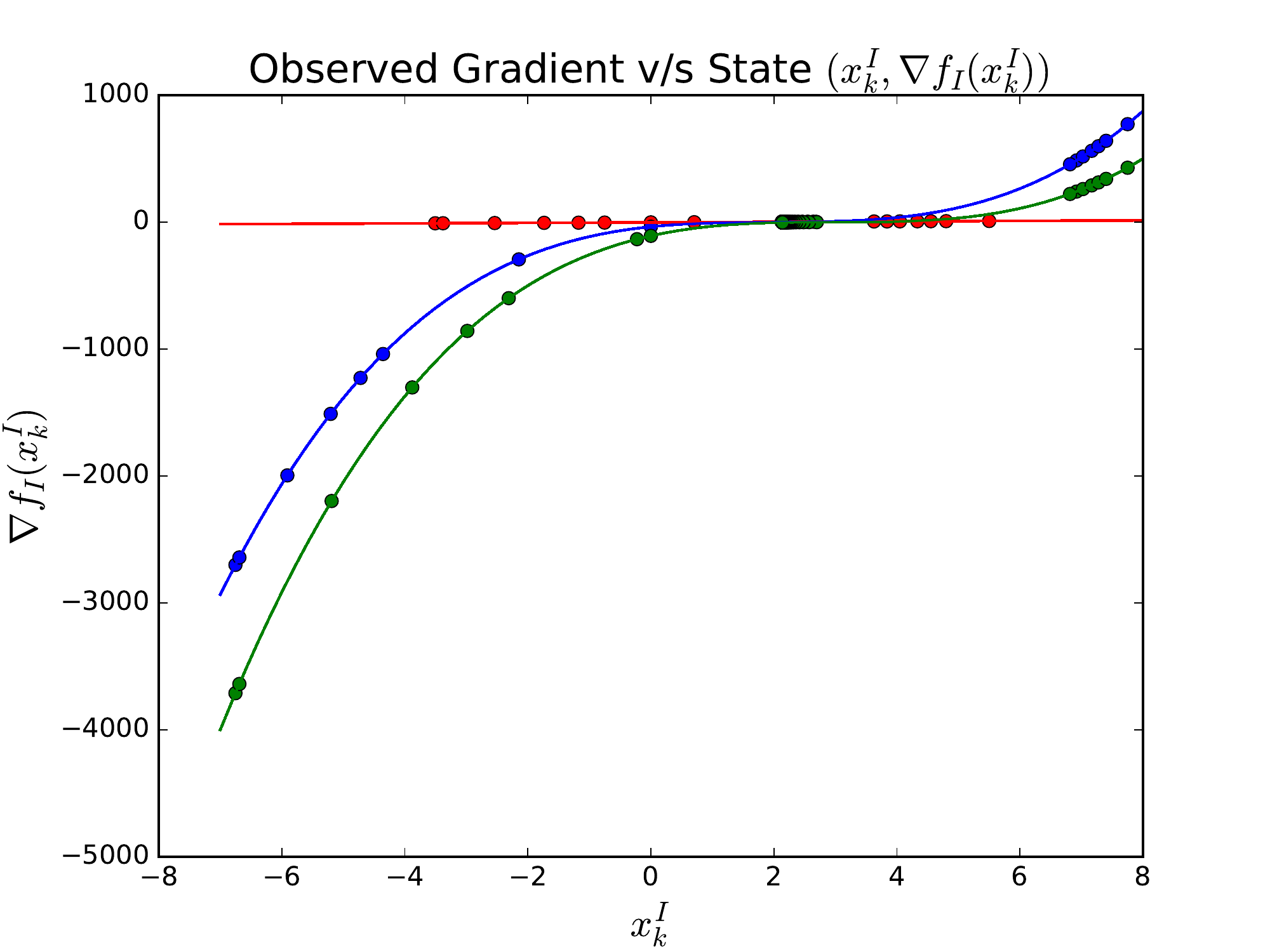}
  \caption{Observed gradients plotted against iterates. Figure also shows least squares polynomial fitting.}
  \label{Fig:F-01}
\end{subfigure} \\
\begin{subfigure}{.48\textwidth}
  \centering
  \includegraphics[width=.95\linewidth]{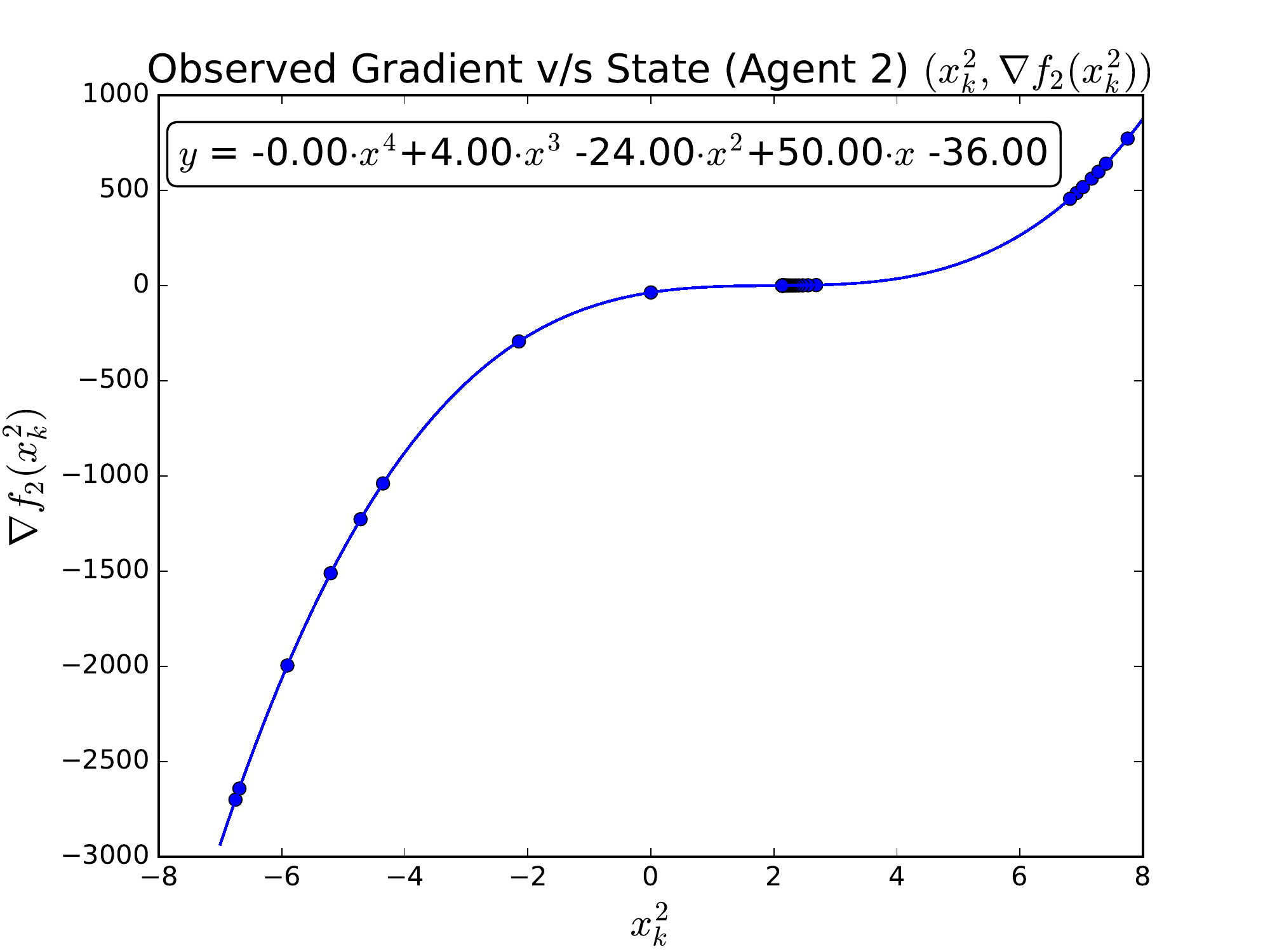}
  \caption{Least squares polynomial fit for $\nabla f_2(x)$.}
  \label{Fig:F-1}
\end{subfigure} \hfill
\begin{subfigure}{.48\textwidth}
  \centering
  \includegraphics[width=.95\linewidth]{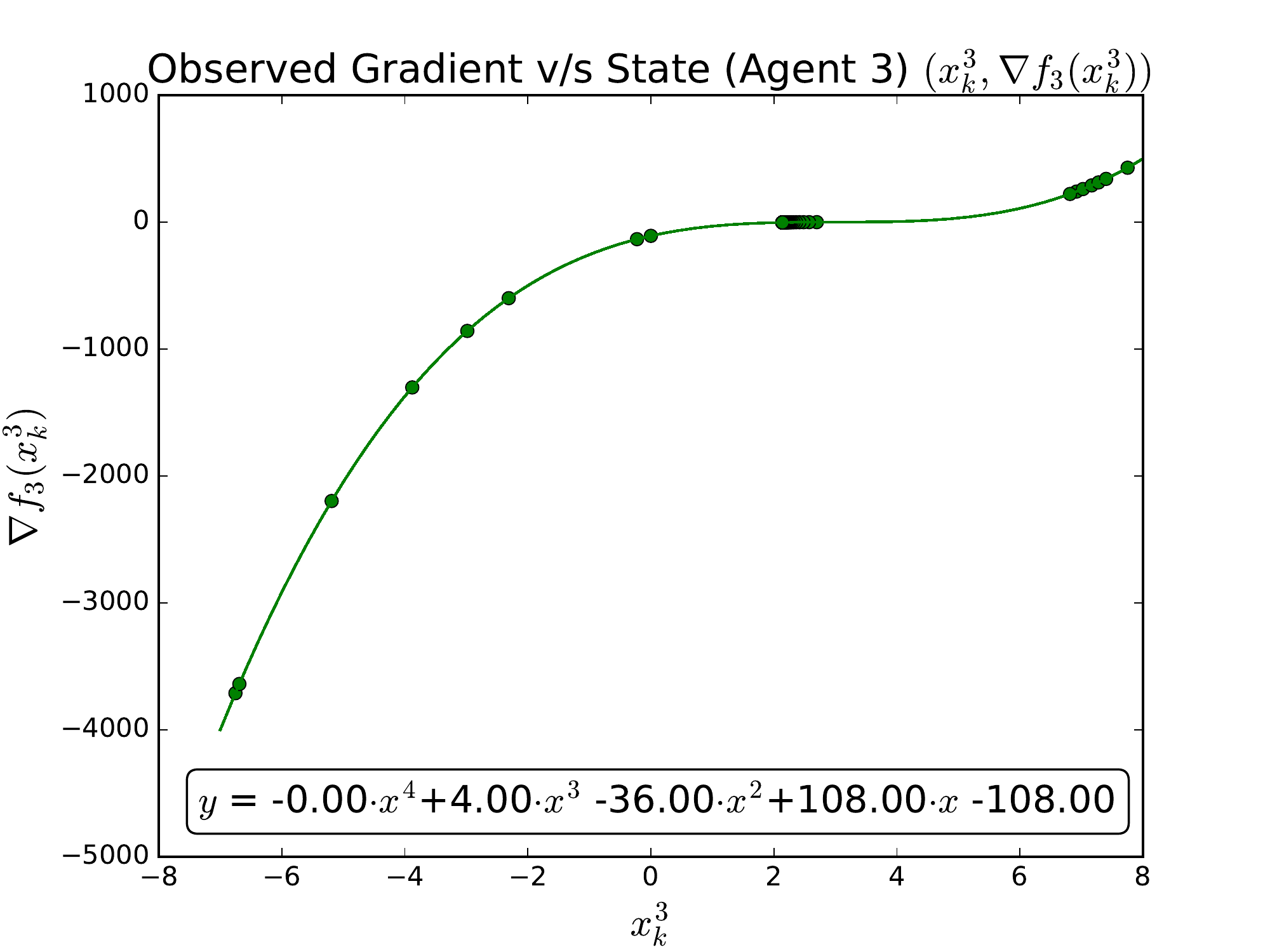}
  \caption{Least squares polynomial fit for $\nabla f_3(x)$.}
  \label{Fig:F-2}
\end{subfigure}%
\caption{An example demonstrating a successful attack by a strong PC adversary and the subsequent privacy loss in distributed optimization.}
\label{Fig:ExamplePLoss}
\end{figure}

\section{Privacy Preserving Distributed Optimization Algorithm} \label{Sec:Algo}

We present privacy preserving distributed optimization algorithm inspired by secure multiparty aggregation algorithm in \cite{abbe2012privacy}. We present a two step protocol. In the first step we perform a secure exchange of arbitrarily generated functions and then use these functions to transform the private objective functions. The second step involves an interleaved consensus and projected gradient algorithm \cite{gade16convsum}. Our algorithm, differs from popular distributed gradient descent protocols, in the sense that it involves the objective function being transformed enabling us to provide privacy guarantees. The convergence analysis used here (presented in \cite{gade16convsum}) is also novel in the sense that it proves convergence of a distributed projected gradient descent algorithm for a convex aggregate of non-convex functions.

Each agent is endowed with a private objective function $f_i(x)$. Now consider that each agent, $I$, arbitrarily generates functions $R_{I,J}(x)$ corresponding to each of its neighbors ($J \in \mathcal{N}_I$). All agents, $I$, then securely share $R_{I,J}(x)$ with their neighbors $J \in \mathcal{N}_I$. Any agent $I$, thus has access to functions that it has shared with neighbors, $R_{I,J}(x)$ for $J \in \mathcal{N}_I$, and functions that it has received from neighbors, $R_{K,I}(x)$ for $K$ such that $I \in \mathcal{N}_K$. This step is followed by obfuscation of the private objective function using the arbitrarily generated functions $R_{I,J}(x)$. Every agent transforms its objective function by adding all the arbitrary functions that it has received from its neighbors and subtracting all the arbitrary functions transmitted to the neighbors, and arrives to the obfuscated objective function $\hat{f}_i(x)$, 
\begin{align}
\hat{f}_i(x) = f_i(x) + \sum_{K:I \in \mathcal{N}_K} R_{K,I}(x) - \sum_{J \in \mathcal{N}_I} R_{I,J}(x)
\label{Eq:F_New}
\end{align}

The objective obfuscation step is followed by iterative distributed projected gradient algorithm similar to \cite{ram2010distributed,gade16convsum}. The algorithm is formally presented as Algorithm~\ref{Algo:PrivDistOptNCFun}. The interleaved consensus and projected gradient algorithm involves two operations. The first operation in the algorithm is to fuse information from the neighbors and build an estimate of average of the parameter vector. A doubly stochastic matrices $B_k$, with the property that any entry $B_k[I,J]$ is greater than zero if and only if $I$ and $J$ can communicate with each other, is used for information fusion. Also, we assume that all non-zero entries are lower bounded by $\eta$, i.e. if $B_k[I,J] > 0$ then $B_k[I,J] \geq \eta$ for some constant $\eta>0$. We can write the fusion step as,
\begin{align}
    v^J_{k} = \sum_{I \in \mathcal{N}_J} B_k[J,I] x^I_{k}. 
    \label{Eq:InfFusALG1}
\end{align}

The information aggregation step is followed by projected gradient step and it is formally written as, 
\begin{align}
    x^J_{k+1} = \mathcal{P}_{\mathcal{X}}\left[v^J_{k} - \alpha_k \ \nabla f_J(v^J_{k}) \right]. \label{Eq:ProjGradALG2}
\end{align}
where, $\mathcal{P}_{\mathcal{X}}$ is the projection operator onto set $\mathcal{X}$ (cf. \cite{bertsekas2003convex}), and $\alpha_k$ is the learning rate.
Projected gradient descent is a well known iterative gradient based method that guarantees convergence to optimum under reducing learning rate ($\alpha_k$) \cite{bertsekas1976goldstein}. We assume that the monotonically non-increasing learning rate (step-size, $\alpha_k$) possesses the following properties,
\begin{align}
     \alpha_k > 0, \ \forall k \geq 0; \quad \alpha_{k+1} \leq \alpha_k, \ \forall k \geq 0; \quad  \sum_{k=0}^\infty \alpha_k = \infty; \ \text{and} \quad \sum_{k=0}^\infty \alpha_k^2 < \infty. \label{Eq:LearnStepCond}
\end{align}

\begin{algorithm}[t]
\caption{Privacy Preserving Distributed Optimization}
\begin{algorithmic}[1]
\State Input: $x^J_k$, $\alpha_k$, NSteps \Comment{NSteps - Termination Criteria}     
\State Result: $x^* = \underset{x \in \mathcal{X}}{\text{argmin}} \sum_{i=1}^{S} f_i(x) $ 
\State Every agent $I$ shares function $R_{I,J}(x)$ with all neighboring agents $J$ 
\State Every agent $I$ updates its objective function to $\hat{f}_i(x)$ \Comment{Eq.~\ref{Eq:F_New}}
\For {k = 1 to NSteps} 
    \For {J = 1 to S}
        \State $v^J_{k} = \sum_{I \in \mathcal{N}_J} B_k[J,I] x^I_{k}$   \Comment{Information Fusion}
        \State $x^J_{k+1} = \mathcal{P}_\mathcal{X} \left[ v^J_{k} - \alpha_k \nabla \hat{f}_J(v^J_{k})\right]$ \Comment{Projected Gradient Step}
    \EndFor
\EndFor 
\end{algorithmic}
\label{Algo:PrivDistOptNCFun}
\end{algorithm}

\begin{remark} [Invariant Aggregate] \label{Rem:InvAgg}
Every arbitrary function $R_{I,J}(x)$ gets added to the objective function of $J$ and gets subtracted from the objective function of $I$. Then, it is easy to see that the sum of obfuscated objective functions, $\hat{f}_i(x)$, is the same as the sum of private objective functions, $f_i(x)$, i.e. the aggregate $f(x)$ is invariant under this transformation.
\begin{align}
\sum_{i=1}^S \hat{f}_i(x) = \sum_{i=1}^S \left( f_i(x) + \sum_{K \ : \ I \in\mathcal{N}_K} R_{K,I}(x) - \sum_{J \in \mathcal{N}_I} R_{I,J}(x) \right) = \sum_{i=1}^S {f}_i(x)  
\end{align} 
\end{remark}

\begin{remark} [Computational Complexity]
Secure transmissions are computationally complex and expensive. It is one of the major drawbacks of cryptographic approaches to privacy and security \cite{weeraddana2013per}. Although, our algorithm involves a secure transmission, it is worth noting that this exchange happens only once, at the start of the algorithm. The iterative consensus and projected gradient steps (Lines 5 - 10 of Algorithm~\ref{Algo:PrivDistOptNCFun}) belong to the category of standard iterative distributed optimization algorithms that are known to be computationally inexpensive. Our algorithm does not involve any additional computational overhead due to privacy.
\end{remark}

\subsection{Generating Arbitrary Functions $R_{I,J}(x)$} \label{Sec:FunDetails}
We claim that using obfuscated functions $\hat{f}_i(x)$ instead of original functions $f_i(x)$, the agents expose only $\hat{f}_i(x)$ and can hide the original objective functions. We will now build intuition for selecting the arbitrarily generated functions $R_{I,J}(x)$. We know that any agent $I$ has access to the following quantities: Original objective function, $f_i(x)$, functions shared by agent $I$ to neighbors, $R_{I,J}(x)$ (controlled by agent $I$), and functions received by agent $I$ from neighbors, $R_{K,I}(x)$ (controlled by agent $K$). Clearly, if any agent $I$ wishes to hide its objective functions, it needs to be smart in generating $R_{I,J}(x)$. 
\begin{assumption} \label{Asmp:ADMSETGROUP}
$(\mathcal{F},+)$ is an additive Abelian group.
\end{assumption}

For the purpose of discussion below, the admissible function set, $\mathcal{F}$ (Definition~\ref{Def:AdmFunSet}), is assumed to satisfy the following axioms. These axioms (A1-A5) are called Abelian axioms; and set $\mathcal{F}$ and operator $``+"$ that satisfies the Abelian axioms, forms an algebraic structure called Additive Abelian Group denoted by $(\mathcal{F},+)$. If the commutativity of the operator is not established, $(\mathcal{F},+)$ would simply be an additive group \cite{judson2010abstract} \cite{gallian2016contemporary}.
\begin{enumerate} [label=A\arabic*]
    \item $(\text{Additive Closure})$ The admissible function set $\mathcal{F}$ is closed under addition. 
    \item $(\text{Zero Element})$ There exists an element $0 \in \mathcal{F}$, such that, for any $f \in \mathcal{F}$, $f + 0 = f$.
    \item $(\text{Inverse Element})$ For every element $f \in \mathcal{F}$, there exists $-f \in \mathcal{F}$ such that, $f + (-f) = 0$.
    \item $(\text{Associativity})$ For all $f_1, f_2, f_3 \in \mathcal{F}$, $(f_1 + f_2) + f_3 = f_1 + (f_2 + f_3)$ holds.
    \item $(\text{Commutativity})$ For all $f_1, f_2 \in \mathcal{F}$, $f_1 + f_2 = f_2 + f_1$ holds.  
\end{enumerate}

\begin{example} [Bounded Degree Polynomials]
The set of all polynomials of degree less than or equal to $k$ is an admissible function set (if $p_m(x)$ denotes a $m$-degree polynomial in $x$, then $\mathcal{F} = \{p_m(x) = \sum_{i=0}^m a_i x^{i} \; | \; m \leq k\}$). Consider two polynomials in $x$ of degree $m$ and $l$ ($l < m \leq k$), denoted by $p_m(x) = \sum_{i=0}^m a_i x^{i}$ and $q_l(x) = \sum_{i=0}^l b_i x^{i}$ where $a_i, b_i \in \mathbb{R}$ for all $i$. Note, $p_m(x)+q_p(x) = \sum_{i=0}^{l} (a_i+b_i) x^{i} + \sum_{i=l+1}^m a_i x^{i}\in \mathcal{F}$ and the closure axiom A1 is satisfied. The zero element corresponding to $q_m$ is a polynomial with all coefficients ($a_i$) being zero. The inverse element corresponding to $p_m$ is a polynomial with $a_i = -b_i $. A4 and A5 are trivially satisfied for polynomials. Thus, $(\mathcal{F},+)$ is an additive Abelian group.
\end{example} 

\begin{example} [Even Degree Polynomials] 
The set of polynomials with even degrees upper bounded by $k$ is an admissible function set, $\mathcal{F} = \{p_k = \sum_{i=0}^{\lfloor k/2 \rfloor} a_{2i} x^{2i}\; | \; a_{2i} \in \mathbb{R} \}$. Additive closure, associativity and commutativity can be easily proved. The zero element is just an even degree polynomial with zero coefficients and inverse element is another polynomial with coefficients $b_{2i} = -a_{2i}$. 
\end{example}

Using the definition and closure properties of the admissible set $\mathcal{F}$ (due to $(\mathcal{F},+)$ being an additive Abelian group), we can clearly see that if functions $R_{I,J}(x) \in \mathcal{F}$, for all ($I$, $J$) pairs, then $\hat{f}_i(x) \in \mathcal{F}$. Another important consideration for selecting $R_{I,J}(x)$, comes from technical assumption \ref{Asmp:GradientCond}. The arbitrary functions are so selected that the obfuscated functions satisfy Assumptions~\ref{Asmp:GradientCond}, i.e. the obfuscated functions should have bounded gradients and must have Lipschitz gradients. It is clear from the above arguments that the strategy of generating functions $R_{I,J}(x)$ is problem specific and needs to be looked at on individually.

\section{Convergence Analysis and Privacy Results} \label{Sec:ConvPrivProof}
In this section, we prove the convergence of Algorithm~\ref{Algo:PrivDistOptNCFun}, present precise privacy claims and provide proofs.
\subsection{Convergence Analysis}
Each of the obfuscated objective functions are possibly non-convex functions with a convex aggregate ($f(x) = \sum_{i} f_i(x) = \sum_{i} \hat{f}_i(x)$ is convex, as noted in Remark~\ref{Rem:InvAgg}). Convergence results developed in \cite{gade16convsum} show that Algorithm~\ref{Algo:PrivDistOptNCFun} correctly minimizes convex aggregate of non-convex functions (and hence accurately solves the optimization problem in (\ref{Eq:OptProb})). 

\begin{theorem} [Theorem 5 and Claim 1, \cite{gade16convsum}]
Agent iterates ($x^I_k$) and the iterate average ($\bar{x}_k$) generated by Algorithm~\ref{Algo:PrivDistOptNCFun} (under Assumptions~\ref{Asmp:FunSet}, \ref{Asmp:GradientCond} and strong connectivity of the communication topology), asymptotically converge to an optimum in $\mathcal{X}^*$.
\end{theorem}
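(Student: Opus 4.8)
The plan is to carry out the standard two--time--scale analysis of distributed projected (sub)gradient descent, with the essential modification that each obfuscated function $\hat{f}_i$ is non-convex and only the aggregate $f=\sum_i\hat{f}_i=\sum_i f_i$ is convex (Remark~\ref{Rem:InvAgg}). Throughout, write $\hat{L}_J$ and $\hat{N}_J$ for the gradient bound and the gradient Lipschitz constant of $\hat{f}_J$; these exist because, by the design prescription of Section~\ref{Sec:FunDetails}, the arbitrary functions $R_{I,J}$ are chosen so that each $\hat{f}_J$ inherits Assumption~\ref{Asmp:GradientCond}. Also note that $v^J_k\in\mathcal{X}$ (a convex combination of the projected iterates $x^I_k\in\mathcal{X}$), that $\bar{x}_k\in\mathcal{X}$ by convexity of $\mathcal{X}$, and that $v^J_k-\bar{x}_k=\sum_I B_k[J,I]\delta^I_k$, so $\|v^J_k-\bar{x}_k\|\le\max_I\|\delta^I_k\|$.

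First I would establish a consensus estimate. Writing the update as $x^J_{k+1}=\sum_I B_k[J,I]x^I_k+e^J_k$ with $e^J_k=\mathcal{P}_{\mathcal{X}}[v^J_k-\alpha_k\nabla\hat{f}_J(v^J_k)]-v^J_k$, nonexpansiveness of $\mathcal{P}_{\mathcal{X}}$ together with $v^J_k\in\mathcal{X}$ gives $\|e^J_k\|\le\alpha_k\hat{L}_J$. Since the $B_k$ are doubly stochastic, bounded below by $\eta$ on their support, and the graph is strongly connected, the transition products $B_k B_{k-1}\cdots B_{t}$ contract disagreement geometrically, whence $\|\delta^J_k\|\le C\sum_{t=0}^{k-1}\gamma^{\,k-1-t}\alpha_t$ for constants $C>0$ and $\gamma\in(0,1)$. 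Because $\alpha_k\downarrow 0$ this forces $\delta^J_k\to 0$, and by the step-size conditions~(\ref{Eq:LearnStepCond}) it also yields $\sum_k\alpha_k\max_I\|\delta^I_k\|<\infty$ and $\sum_k\max_I\|\delta^I_k\|^2<\infty$.

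Next I would run a Fej\'er-type recursion on $W_k:=\tfrac1S\sum_I\|x^I_k-x^*\|^2$ for an arbitrary fixed $x^*\in\mathcal{X}^*$. Using the projection inequality $\|\mathcal{P}_{\mathcal{X}}[y]-z\|^2\le\|y-z\|^2$ for $z\in\mathcal{X}$, expanding, bounding $\|\nabla\hat{f}_J(v^J_k)\|\le\hat{L}_J$, averaging over $J$, and applying Jensen with the column sums of $B_k$ to get $\tfrac1S\sum_J\|\sum_I B_k[J,I](x^I_k-x^*)\|^2\le\tfrac1S\sum_I\|x^I_k-x^*\|^2$, I obtain
\begin{equation}
W_{k+1}\le W_k-\frac{2\alpha_k}{S}\sum_{J=1}^{S}\langle\nabla\hat{f}_J(v^J_k),\,v^J_k-x^*\rangle+C_2\,\alpha_k^2 .
\end{equation}
The cross term is where non-convexity must be dealt with, and this is the main obstacle: the summands cannot be lower bounded term by term. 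Splitting $v^J_k-x^*=(v^J_k-\bar{x}_k)+(\bar{x}_k-x^*)$, the first part contributes at most $\hat{L}_J\max_I\|\delta^I_k\|$ in absolute value; in the second part I use Assumption~\ref{Asmp:GradLip} to replace $\nabla\hat{f}_J(v^J_k)$ by $\nabla\hat{f}_J(\bar{x}_k)$ at the cost of an error of norm at most $\hat{N}_J\max_I\|\delta^I_k\|$, after which the gradients collapse, $\sum_J\nabla\hat{f}_J(\bar{x}_k)=\nabla f(\bar{x}_k)$, and convexity of $f$ (Assumption~\ref{Asmp:Function}) gives $\langle\nabla f(\bar{x}_k),\bar{x}_k-x^*\rangle\ge f(\bar{x}_k)-f^*$. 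Collecting the error terms (all bounded by multiples of $\alpha_k\max_I\|\delta^I_k\|$ and $\alpha_k^2$, using $\mathrm{diam}(\mathcal{X})<\infty$) into a single nonnegative summable sequence $\beta_k$, this reads
\begin{equation}
W_{k+1}\le W_k-\frac{2\alpha_k}{S}\bigl(f(\bar{x}_k)-f^*\bigr)+\beta_k,\qquad \beta_k\ge 0,\quad \sum_k\beta_k<\infty .
\end{equation}

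Finally I would invoke a deterministic Robbins--Siegmund--type (almost-supermartingale) lemma. Since $\bar{x}_k\in\mathcal{X}$ we have $f(\bar{x}_k)-f^*\ge 0$, so the lemma gives that $W_k$ converges and $\sum_k\alpha_k\bigl(f(\bar{x}_k)-f^*\bigr)<\infty$; combined with $\sum_k\alpha_k=\infty$ this forces $\liminf_k f(\bar{x}_k)=f^*$. Because $\sum_I\delta^I_k=0$, $W_k=\tfrac1S\sum_I\|\delta^I_k\|^2+\|\bar{x}_k-x^*\|^2$, so $\|\bar{x}_k-x^*\|$ converges for every $x^*\in\mathcal{X}^*$. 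By compactness of $\mathcal{X}$ some subsequence of $\bar{x}_k$ converges; choosing it so that $f(\bar{x}_k)\to f^*$ as well and using continuity of the convex function $f$, the limit $\tilde{x}$ satisfies $f(\tilde{x})=f^*$, i.e.\ $\tilde{x}\in\mathcal{X}^*$. The already-established convergence of $\|\bar{x}_k-\tilde{x}\|$, together with this subsequence tending to $0$, then gives $\bar{x}_k\to\tilde{x}\in\mathcal{X}^*$, and since $x^I_k=\bar{x}_k+\delta^I_k$ with $\delta^I_k\to 0$ every agent iterate converges to the same optimum $\tilde{x}$. The crux throughout is pairing the consensus decay with Assumption~\ref{Asmp:GradLip}, which lets one evaluate all the non-convex gradients at the common point $\bar{x}_k$ and then invoke convexity of the aggregate.
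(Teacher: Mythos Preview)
Your proposal is correct and is precisely the argument the paper defers to: in this paper the ``proof'' consists solely of the sentence ``The proof follows as per the proof for Theorem~5 in \cite{gade16convsum},'' so there is no in-paper derivation to compare against beyond that citation. The two-time-scale scheme you lay out---geometric consensus decay from doubly stochastic mixing, a Fej\'er recursion on $W_k=\frac{1}{S}\sum_I\|x^I_k-x^*\|^2$, the crucial use of Assumption~\ref{Asmp:GradLip} to move every $\nabla\hat{f}_J(v^J_k)$ to the common point $\bar{x}_k$ so that the non-convex terms collapse to $\nabla f(\bar{x}_k)$, and then a deterministic Robbins--Siegmund finish---is exactly the machinery of \cite{gade16convsum} specialized to the obfuscated functions $\hat{f}_J$ (which, by Remark~\ref{Rem:InvAgg} and the design constraints of Section~\ref{Sec:FunDetails}, preserve the convex aggregate and the gradient hypotheses).
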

\begin{proof}
The proof follows as per the proof for Theorem 5 in \cite{gade16convsum}. \hfill $\blacksquare$
\end{proof}

\subsection{Privacy Guarantees}
We first establish a connectivity condition on the underlying communication topology. Note that we only need strong connectedness for convergence. The following conditions will be used to characterize graph topologies for which privacy guarantees can be provided. We begin with the definition of vertex connectivity and use it to define admissible topologies. 

\begin{definition} [Vertex Connectivity $\kappa(\mathcal{G})$, \cite{godsil2013algebraic}]  
The vertex connectivity of a connected graph $\mathcal{G}$ is the minimum number of vertices whose deletion would increase the number of connected components. 
\end{definition}

\begin{definition} [$f$-admissible topology] \label{Def:f-conntopo}
A graph $\mathcal{G}$ is called $f$-admissible if its vertex connectivity is greater than $f$, i.e. $\kappa(\mathcal{G}) > f$.
\end{definition}

We use a result from Whitney \cite{whitney1932congruent} that relates the vertex connectivity ($\kappa(\mathcal{G})$), the edge connectivity ($\lambda(\mathcal{G})$) and the minimum degree of the graph ($\delta(\mathcal{G})$). We will use this result to obtain a inequality on the minimum degree of a $f$-admissible graph.
\begin{theorem} [Whitney, \cite{whitney1932congruent}]
For any arbitrary graph $\mathcal{G}$, $\kappa(\mathcal{G}) \leq \lambda(\mathcal{G}) \leq \delta(\mathcal{G})$.
\end{theorem}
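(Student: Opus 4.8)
My plan is to prove the two inequalities separately, the first being essentially immediate. Assume $\mathcal{G}$ is connected with at least two vertices (otherwise $\kappa(\mathcal{G}) = \lambda(\mathcal{G}) = 0$). For $\lambda(\mathcal{G}) \le \delta(\mathcal{G})$: pick a vertex $v$ of degree $\delta(\mathcal{G})$; the $\delta(\mathcal{G})$ edges incident to $v$ form an edge cut, since deleting them isolates $v$. Hence $\lambda(\mathcal{G}) \le \delta(\mathcal{G})$.

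For $\kappa(\mathcal{G}) \le \lambda(\mathcal{G})$ I would fix a minimum edge cut, realized by a partition $\mathcal{V} = S \cup \bar S$ with exactly $\lambda := \lambda(\mathcal{G})$ edges between $S$ and $\bar S$, and split on whether that cut is ``complete bipartite''. \textbf{Case 1:} every vertex of $S$ is adjacent to every vertex of $\bar S$. Then the number of cross edges is $|S|\,|\bar S|$, so $|S|\,|\bar S| \le \lambda \le \delta(\mathcal{G}) \le |\mathcal{V}| - 1 = |S| + |\bar S| - 1$; since $(|S|-1)(|\bar S|-1) \ge 1$ whenever both parts have size at least $2$, this forces $|S| = 1$ or $|\bar S| = 1$, say $S = \{v\}$. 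Then every edge at $v$ is a cross edge, so $\deg(v) = \lambda$ and $\mathcal{N}_v$ is a set of $\lambda$ vertices whose deletion isolates $v$, giving $\kappa(\mathcal{G}) \le \lambda$. \textbf{Case 2:} some $x \in S$ and $y \in \bar S$ are non-adjacent. Because $x \not\sim y$, no cross edge joins $x$ to $y$, so for each of the $\lambda$ cross edges I can select an endpoint distinct from both $x$ and $y$; let $W$ be the set of selected vertices, so $|W| \le \lambda$ and $x,y \notin W$. Every $x$-$y$ path in $\mathcal{G}$ uses a cross edge and therefore meets $W$, so deleting $W$ leaves a graph that still contains both $x$ and $y$ but has no $x$-$y$ path; it is disconnected, hence $\kappa(\mathcal{G}) \le |W| \le \lambda$.

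The only delicate point is bookkeeping for degenerate instances — disconnected graphs, graphs on one or two vertices, complete graphs, and cuts that isolate a single vertex — where deletion of a vertex set can leave a graph with at most one vertex and one must invoke the convention $\kappa(K_n) = n - 1 = \lambda(K_n)$. I expect this to be the main (though purely routine) obstacle, while the substantive content — the counting estimate in Case 1 and the path-blocking argument in Case 2 — is short.
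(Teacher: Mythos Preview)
Your argument is correct and is the standard textbook proof of Whitney's inequality. The paper, however, does not supply a proof of this statement at all: it is quoted as a classical result with a citation to Whitney~\cite{whitney1932congruent} and is used only as a black box to deduce that an $f$-admissible graph has minimum degree at least $f+1$. So there is nothing in the paper to compare against; your write-up simply fills in a proof the authors chose to omit.
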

\noindent We know that for a $f$-admissible graph, $\kappa(\mathcal{G}) > f$, along with the above result we get, $\delta(\mathcal{G}) \geq \kappa(\mathcal{G}) > f \text{ implies } \delta(\mathcal{G}) \geq f + 1$. This condition is necessary to ensure the privacy of individual objective functions\footnote{We will characterize a few failure scenarios (loss of privacy) including the one with presence of nodes that have degree less than $f+1$. This is, however, not the only loss of privacy scenario.}. Topology being $f$-admissible is necessary and sufficient to ensure privacy of additive objective functions of type $f_\mathcal{I}(x) = \sum_{i \in \mathcal{I}} f_i(x)$, where $\mathcal{I} \subset \mathcal{V}-\mathcal{A}$.

We now formally state the privacy guarantee for our privacy preserving distributed optimization algorithm. As described in Section~\ref{Sec:AdvModel}, strong PC adversarial coalition has access to parameter vectors (from all agents) and the underlying graph topology. The problem is assumed to satisfy Assumptions~\ref{Asmp:FunSet} and \ref{Asmp:GradientCond} for correctness of the algorithm. 

\begin{theorem}
\label{Th:TPriv-2}
Let $\mathcal{A}$ denote a coalition of $f$ PC adversaries among $S$ agents. Let the underlying communication topology be $f$-admissible then Algorithm~\ref{Algo:PrivDistOptNCFun} is a privacy preserving algorithm in the sense of Definition~\ref{Def:Privacy}. 
\end{theorem}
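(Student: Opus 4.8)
The plan is to show that from the adversary's vantage point, the observed execution is consistent with any admissible tuple of private functions summing to $f(x)$. Fix the true functions $(f_1,\ldots,f_S)$ and the true arbitrary functions $\{R_{I,J}\}$. Let $(h_1,\ldots,h_S)\in\mathcal F^S$ be an arbitrary target tuple with $\sum_i h_i = f = \sum_i f_i$. The coalition $\mathcal A$ observes: all iterates $x^I_k$, the topology $\mathcal G$, its own private functions $\{f_i : i\in\mathcal A\}$, and all the $R$-functions incident to coalition members (both sent and received). The core task is to exhibit an \emph{alternate world} — a choice of private functions $\{h'_i\}$ and arbitrary functions $\{R'_{I,J}\}$ — that (i) reproduces exactly the same observable transcript, (ii) agrees with the target, i.e. $h'_i = h_i$ for all good agents $i\notin\mathcal A$ (and $h'_i = f_i$ for $i\in\mathcal A$, which the adversary already knows anyway), and (iii) lies in $\mathcal F^S$.

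The key observation is that the transcript (iterates $x^I_k$) depends on the private data \emph{only through the obfuscated functions} $\hat f_I$, via the update rules \eqref{Eq:InfFusALG1}--\eqref{Eq:ProjGradALG2}. So it suffices to choose the alternate world so that $\hat f_I' = \hat f_I$ for every agent $I$ — then the entire execution, hence the transcript, is identical. Writing $\hat f_I = f_I + \sum_{K:I\in\mathcal N_K} R_{K,I} - \sum_{J\in\mathcal N_I} R_{I,J}$, and defining the required perturbation $\Delta_I := h_I - f_I$ on good agents (with $\Delta_I = 0$ for $I\in\mathcal A$), I need to find new arbitrary functions $R'_{I,J}\in\mathcal F$ such that, for every agent $I$,
\[
\sum_{K:I\in\mathcal N_K}(R'_{K,I}-R_{K,I}) \;-\; \sum_{J\in\mathcal N_I}(R'_{I,J}-R_{I,J}) \;=\; -\Delta_I,
\]
and moreover $R'_{I,J} = R_{I,J}$ on every edge incident to $\mathcal A$ (so the adversary's view of the $R$'s is unchanged). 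Setting $\phi_{I,J} := R'_{I,J}-R_{I,J}$, this is a \emph{flow problem} on the graph: find edge-functions $\phi_{I,J}$ (antisymmetric in the sense that $\phi_{I,J}$ on edge $I\to J$ contributes $+\phi$ to $J$'s obfuscated function and $-\phi$ to $I$'s) with prescribed divergence $-\Delta_I$ at each good node, zero divergence at adversary nodes, and $\phi\equiv 0$ on all edges touching $\mathcal A$. Since $\sum_I \Delta_I = \sum_{i\notin\mathcal A}(h_i-f_i) = (f-\sum_{i\in\mathcal A}f_i) - (f-\sum_{i\in\mathcal A}f_i) = 0$, the total divergence is zero, so such a flow exists on any connected graph restricted to the good nodes — \emph{provided} the induced subgraph $\mathcal G[\mathcal V\setminus\mathcal A]$ is connected. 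This is exactly where $f$-admissibility enters: if $\kappa(\mathcal G) > f = |\mathcal A|$, then deleting the $f$ adversary vertices leaves $\mathcal G[\mathcal V\setminus\mathcal A]$ connected, so a spanning-tree routing argument builds the desired $\phi$ (pick a spanning tree of the good subgraph, orient it, and push the deficits along tree edges; each $\phi_{I,J}$ is a finite sum/difference of $\Delta$'s, hence in $\mathcal F$ by the Abelian group structure of Assumption~\ref{Asmp:ADMSETGROUP}).

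The main obstacle — and the step deserving the most care — is verifying that the alternate world is \emph{fully admissible}, not just transcript-consistent: one must check $h'_i\in\mathcal F$ (immediate, since $h_i\in\mathcal F^S$ by hypothesis), $R'_{I,J}\in\mathcal F$ (from closure of $(\mathcal F,+)$ under the tree-routing construction), and crucially that the new obfuscated functions still satisfy Assumption~\ref{Asmp:GradientCond} so the algorithm genuinely runs identically — but since $\hat f_I' = \hat f_I$ exactly, this is automatic. A secondary subtlety is arguing the converse direction implicit in the remark preceding the theorem (that $f$-admissibility is also \emph{necessary}): if some good node $I$ has degree $\le f$, the coalition can surround it, learn all its incident $R$'s, and back out $f_I$ from $\hat f_I$; I would note this as the tightness companion but the theorem as stated only requires the sufficiency direction above. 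I would close by remarking that the flow/spanning-tree construction also shows the guarantee is "tight" in the sense that the ambiguity set is exactly the coset $\{(h_i): \sum h_i = f,\ h_i = f_i \text{ on } \mathcal A\}$ — nothing more, nothing less, is hidden.
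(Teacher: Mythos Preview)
Your proposal is correct and follows essentially the same route as the paper. Both arguments reduce indistinguishability to matching the obfuscated functions $\hat f_I$, freeze the $R$-functions on edges incident to $\mathcal A$, invoke $f$-admissibility to guarantee that the good-node subgraph is connected, and then use a spanning tree on that subgraph to solve the resulting system. The only cosmetic difference is phrasing: you cast the residual system as a zero-sum divergence (flow) problem routed along a spanning tree, while the paper writes it via the incidence matrix $\tilde{\mathbf B}$, splits $\tilde{\mathbf B}=[\tilde{\mathbf B}_{\rm ST}\ \tilde{\mathbf B}_{\rm EE}]$, and inverts on the spanning-tree block using its left pseudoinverse---these are the same linear-algebraic fact in two languages.
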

\begin{proof}
We present a constructive method to show that given an execution (and corresponding observations), any guess of objective functions ($\in \mathcal{F}^S$, such that the sum of functions is $f(x)$) made by the coalition is equally likely.

We conservatively assume that the adversary can observe the obfuscated functions, $\hat{f}_i(x)$, the private objective functions of the coalition members $f_\mathscr{A}(x)$ ($\mathscr{A} \in \mathcal{A}$) and arbitrary functions transmitted from and received by each of the coalition members, $R_{\mathscr{A},J}$ and $R_{K,\mathscr{A}}$ ($J \in \mathcal{N}_{\mathscr{A}}$ and $K \text{ such that } \mathscr{A} \in \mathcal{N}_{K}$, for all $\mathscr{A} \in \mathcal{A}$). Since the adversaries also follow the same protocol (Algorithm~\ref{Algo:PrivDistOptNCFun}), the coalition is also aware of the fact that the private objective functions have been obfuscated by function sharing approach (Eq.~\ref{Eq:F_New}).

\begin{equation*}
\hat{f}_i(x) = f_i(x) + \sum_{K:I \in \mathcal{N}_K} R_{K,I}(x) - \sum_{J \in \mathcal{N}_I} R_{I,J}(x)
\end{equation*}

Clearly, one can rewrite this transformation approach, using signed incidence matrix of bidirectional graph $\mathcal{G}$ \cite{godsil2013algebraic} \cite{diestel2005graph}.
\begin{align}
\mathbf{\hat{f}} = \textbf{f} + \textbf{B} \textbf{R}.
\end{align}
where, $\mathbf{\hat{f}} = \begin{bmatrix}\hat{f}_1(x), \hat{f}_1(x), \ldots, \hat{f}_S(x)\end{bmatrix}^T$ is a $S \times 1$ vector of obfuscated functions $\hat{f}_i(x)$ for $i = \{1, 2, \ldots, S\}$, and $\textbf{f} = \begin{bmatrix} {f}_1(x), {f}_1(x), \ldots, {f}_S(x)\end{bmatrix}^T$ is a $S \times 1$ vector of private (true) objective functions, $f_i(x)$. $\textbf{B} = \begin{bmatrix}
B_C, -B_C
\end{bmatrix}$, where $B_C$ (of dimension $S \times |\mathcal{E}|/2$) is the incidence matrix of a directed graph obtained by considering only one of the directions of every bidirectional edge in graph $\mathcal{G}$\footnote{This represents an orientation of graph $\mathcal{G}$ \cite{godsil2013algebraic}.}. Each column of $\mathbf{B}$ represents a directed communication link between any two agents. Hence, any bidirectional edge between agents $I$ and $J$ is represented as two directed links, $I$ to $J$, $(I,J) \in \mathcal{E}$ and $J$ to $I$ $(J,I) \in \mathcal{E}$ and corresponds to two columns in $\mathbf{B}$. $\textbf{R}$ represents a $|\mathcal{E}| \times 1$ vector consisting of functions $R_{I,J}(x)$. Each entry in vector $\mathbf{R}$, function $R_{I,J}(x)$ corresponds to a column of $\mathbf{B}$ which, in turn corresponds to link $(I,J) \in \mathcal{E}$; and similarly, function $R_{J,I}(x)$ corresponds to a different column of $\mathbf{B}$ which, in turn corresponds to link $(J,I) \in \mathcal{E}$. Note that $\ell^\text{th}$ row of column vector $\mathbf{R}$ corresponds to $\ell^\text{th}$ column of incidence matrix $\mathbf{B}$. 

We will show that, two different sets of true objective functions ($\mathbf{f}$ and $\mathbf{f}^o$) and correspondingly two different set of arbitrary functions ($\mathbf{R}$ and $\mathbf{G}$), can lead to exactly same execution and observations for the adversaries\footnote{$\textbf{f}$ and $\textbf{f}^o$ are dissimilar and arbitrarily different.}. We want to show that both these cases can result in same obfuscated objective functions. That is,
\begin{align}
\mathbf{\hat{f}} = 
\textbf{f} + 
\textbf{B} \textbf{R} = 
\textbf{f}^o + 
\textbf{B} \textbf{G}.
\label{Eq:ProofEq1}
\end{align}

We will show that given any set of private objective functions $\mathbf{f}^o \in \mathcal{F}^S$, suitably selecting arbitrary functions $G_{I,J}(x)$ corresponding to links incident at ``good" agents, it is possible to make $\mathbf{f}^o$ indistinguishable from original private objective functions $\mathbf{f}$, solely based on the execution observed by the adversaries. We do so by determining entries of $\textbf{G}$, which are arbitrary functions that are dissimilar from $R_{I,J}(x)$ when $I$ and $J$ are both ``good". The design $\mathbf{G}$ such that the obfuscated objective functions $\mathbf{\hat{f}}$ are the same for both situations. 

Since adversaries observe arbitrary functions corresponding to edges incident to and from them, we set the arbitrary functions corresponding to edges incident on adversaries as $G_{K,\mathscr{A}} = R_{K,\mathscr{A}}$ and arbitrary functions corresponding to edges incident away the adversary as $G_{\mathscr{A},J} = R_{\mathscr{A},J}$ (where $K: \mathscr{A} \in \mathcal{N}_{K}$ and $J \in \mathcal{N}_{\mathscr{A}}$, for all $\mathscr{A} \in \mathcal{A}$). Now, we define $\mathbf{\tilde{G}}$ as the vector containing all elements of $\mathbf{G}$ except those corresponding to the edges incident to and from the adversaries\footnote{The only entries of $\mathbf{G}$, that are undecided at this stage are included in $\mathbf{\tilde{G}}$. These are functions $G_{I,J}$ such that $I$, $J$ are both ``good".}. Similarly, we define $\mathbf{\tilde{B}}$ to be the new incidence matrix obtained after deleting all edges that are incident on the adversaries (i.e. deleting columns corresponding to the links incident on adversaries, from the old incidence matrix $\mathbf{B}$). We subtract $G_{\mathscr{A},J}(x)$ and $G_{K,\mathscr{A}}(x)$ ($\forall \ \mathscr{A} \in \mathcal{A}$) by subtracting them from $\mathbf{[\mathbf{\hat{f}} - \textbf{f}^o]}$ (in Eq.~\ref{Eq:ProofEq1}) to get effective function difference denoted by $\mathbf{[\mathbf{\hat{f}} - \textbf{f}^o]}_{\rm eff}$ as follows,
\begin{align}
\mathbf{[\mathbf{\hat{f}} - \textbf{f}^o]} &= \mathbf{B} \mathbf{G} = \mathbf{f} - \mathbf{f}^o +\mathbf{B} \mathbf{R},  && \ldots (\text{From Eq.~\ref{Eq:ProofEq1}})\\
\mathbf{[\mathbf{\hat{f}} - \textbf{f}^o]}_{\rm eff} &= [\mathbf{\hat{f}} - \mathbf{f}^o] - \sum_{\mathscr{A} \in \mathcal{A}} \left[ \sum_{K:\mathscr{A} \in \mathcal{N}_K} G_{K,\mathscr{A}}(x) - \sum_{J\in \mathcal{N}_\mathscr{A}} G_{\mathscr{A},J}(x) \right] = \mathbf{\tilde{B}} \mathbf{\tilde{G}}, \label{Eq:ProofEq2}
\end{align}
where, if $d$ entries of $\mathbf{G}$ were fixed\footnote{Total number of edges incident to and from adversaries is $d$. We fixed them to be the same as corresponding entries from $\mathbf{R}$, since the coalition can observe them.} then $\mathbf{\tilde{G}}$ is a $(|\mathcal{E}|-d) \times 1$ vector and $\mathbf{\tilde{B}}$ is a matrix with dimension $S \times (|\mathcal{E}| - d)$. The columns deleted from $\mathbf{B}$ correspond to the edges that are incident to and from the adversaries. Hence, $\mathbf{\tilde{B}}$ represents the incidence of a graph with these edges deleted.

We know from the $f$-admissibility of the graph, that $\mathbf{\tilde{B}}$ connects all the non-adversarial agents into a connected component\footnote{The adversarial nodes become disconnected due to the deletion of edges incident on adversaries (previous step).}. Since, the remaining edges form a connected component, the edges can be split into two groups. A group with edges that form a spanning tree over the good nodes (agents) and all other edges in the other group (see Remark~\ref{Ex:STEE} and Figure~\ref{Fig:ExSTEE}). Let $\mathbf{\tilde{B}}_{\rm ST}$ represent the incidence matrix\footnote{Its columns correspond to the edges that form spanning tree.} of the spanning tree and $\mathbf{\tilde{G}}_{\rm ST}$ represents the arbitrary functions corresponding to the edges of the spanning tree. $\mathbf{\tilde{B}}_{\rm EE}$ represents the incidence matrix formed by all other edges and $\mathbf{\tilde{G}}_{\rm EE}$ represents the arbitrary functions related to all other edges.
\begin{align}
\mathbf{[\mathbf{\hat{f}} - \textbf{f}^o]}_{\rm eff} = \begin{bmatrix}
\mathbf{\tilde{B}}_{\rm ST} & \mathbf{\tilde{B}}_{\rm EE}
\end{bmatrix} \begin{bmatrix}
\mathbf{\tilde{G}}_{\rm ST} \\
\mathbf{\tilde{G}}_{\rm EE}
\end{bmatrix}
=\mathbf{\tilde{B}}_{\rm ST} \mathbf{\tilde{G}}_{\rm ST} + \mathbf{\tilde{B}}_{\rm EE} \mathbf{\tilde{G}}_{\rm EE}.
\end{align}

\noindent We now arbitrary assign functions to elements of $\mathbf{\tilde{G}}_{\rm EE}$ (as per Section~\ref{Sec:FunDetails}) and then compute the arbitrary weights for $\mathbf{\tilde{G}}_{\rm ST}$. We know that the columns of $\mathbf{\tilde{B}}_{\rm ST}$ are linearly independent, since $\mathbf{\tilde{B}}_{\rm ST}$ is the incidence matrix of a spanning tree (cf. Lemma~2.5 in \cite{bapat2010graphs}). Hence, the left pseudoinverse\footnote{$A^\dagger$ represents the pseudoinverse of matrix $A$.} of $\mathbf{\tilde{B}}_{\rm ST}$ exists; and $\mathbf{\tilde{B}}_{\rm ST}^ \dagger \mathbf{\tilde{B}}_{\rm ST} = \mathbb{I}$, giving us the solution for $\mathbf{\tilde{G}}_{\rm ST}$ \footnote{An alternate way to look at this would be to see that $\mathbf{\tilde{B}}_{\rm ST}^T \mathbf{\tilde{B}}_{\rm ST}$ represents the edge Laplacian \cite{zelazo2007agreement} of the spanning tree. The edge Laplacian of an acyclic graph is non-singular and this also proves that left-pseudoinverse of $\mathbf{\tilde{B}}_{\rm ST}$ exists.}.
\begin{align}
\mathbf{\tilde{G}}_{\rm ST} = \mathbf{\tilde{B}}_{\rm ST} ^ \dagger \left[ \mathbf{[\mathbf{\hat{f}} - \textbf{f}^o]}_{\rm eff} - \mathbf{\tilde{B}}_{\rm EE} \mathbf{\tilde{G}}_{\rm EE} \right]. \label{Eq:SolutionEq}
\end{align}

Using the construction shown above, for any $\textbf{f}^o$ we can construct $\textbf{G}$ such that the execution as seen by adversaries is exactly the same as the original problem where the objective is $\textbf{f}$ and the arbitrary functions are $\textbf{R}$. A strong PC coalition cannot distinguish between two executions involving $\textbf{f}^o$ and $\textbf{f}$. Hence, no coalition can estimate $f_i(x)$ ($i \cancel{\in} \mathcal{A}$). $\hfill \blacksquare$
\end{proof}

\begin{figure}[!t]
\begin{subfigure}{.48\textwidth}
  \centering
  \includegraphics[width=0.75\linewidth]{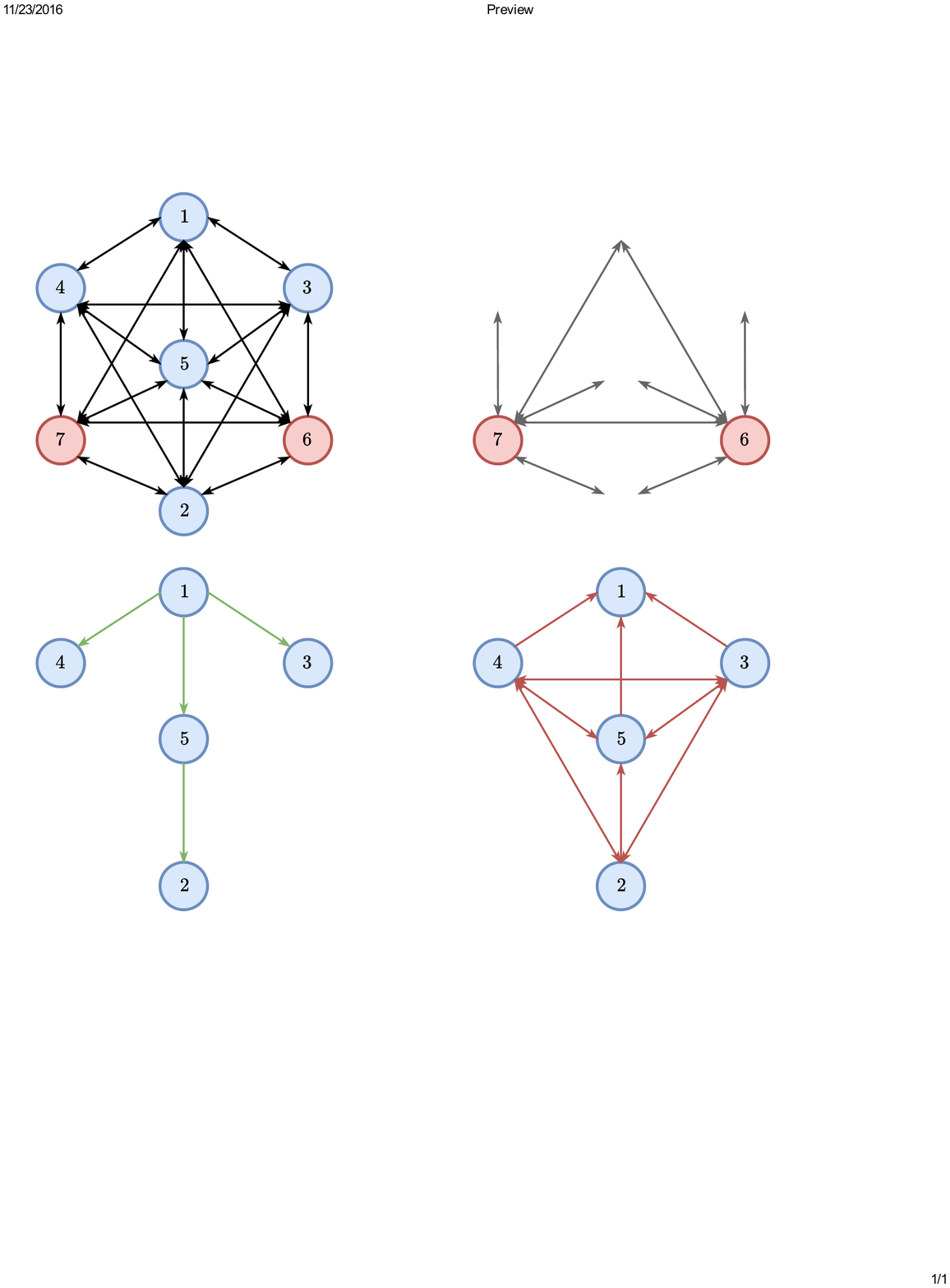}
  \caption{Example }
  \label{Fig:ExSTEE1}
\end{subfigure} \hfill
\begin{subfigure}{.48\textwidth}
  \centering
  \includegraphics[width=0.77\linewidth]{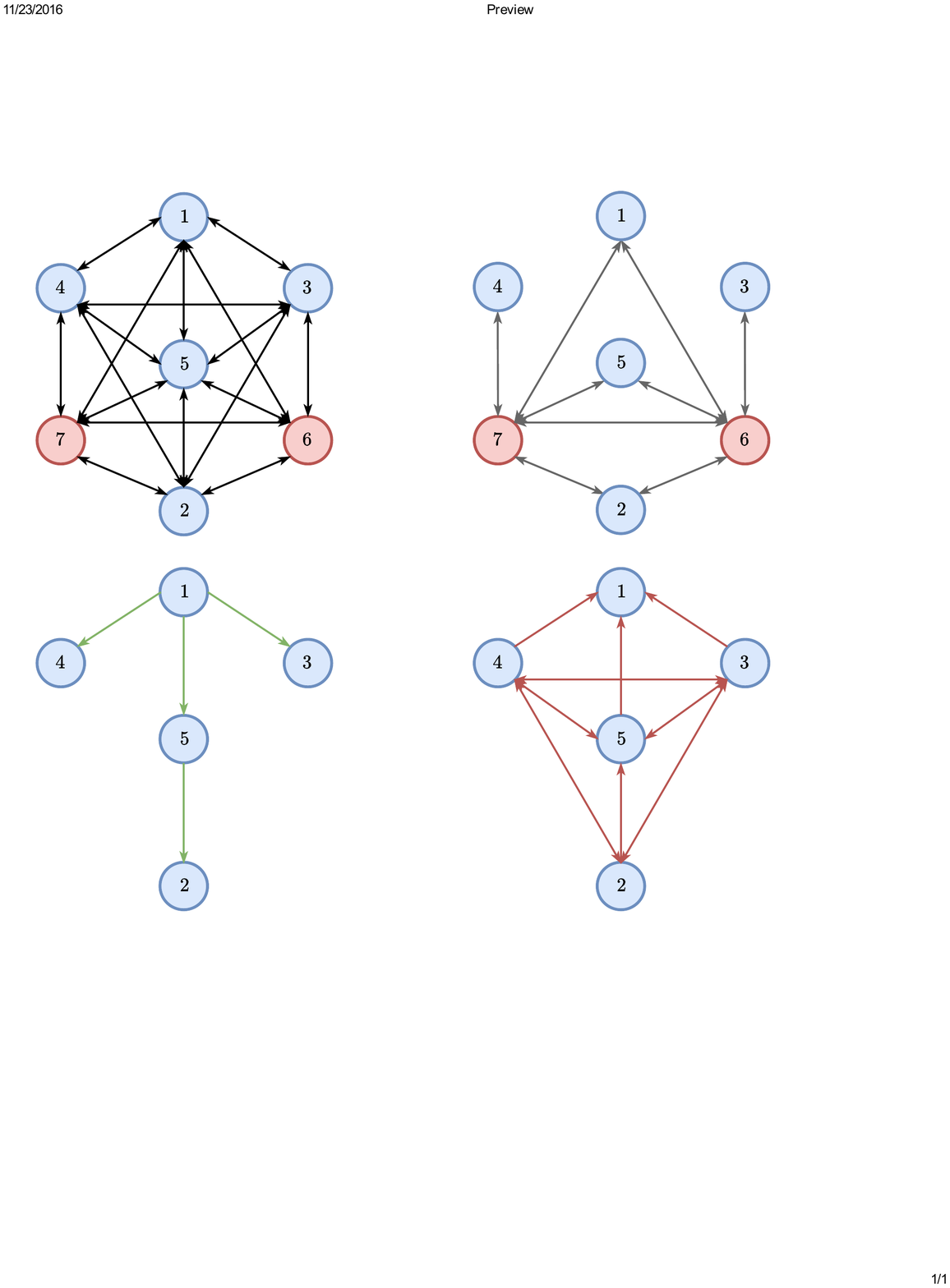}
  \caption{Step-1. Fix $\mathbf{G}_{\mathscr{A},l}$ and $\mathbf{G}_{l,\mathscr{A}}$.}
  \label{Fig:ExSTEE2}
\end{subfigure} \\
\begin{subfigure}{.48\textwidth}
  \centering
  \includegraphics[width=0.75\linewidth]{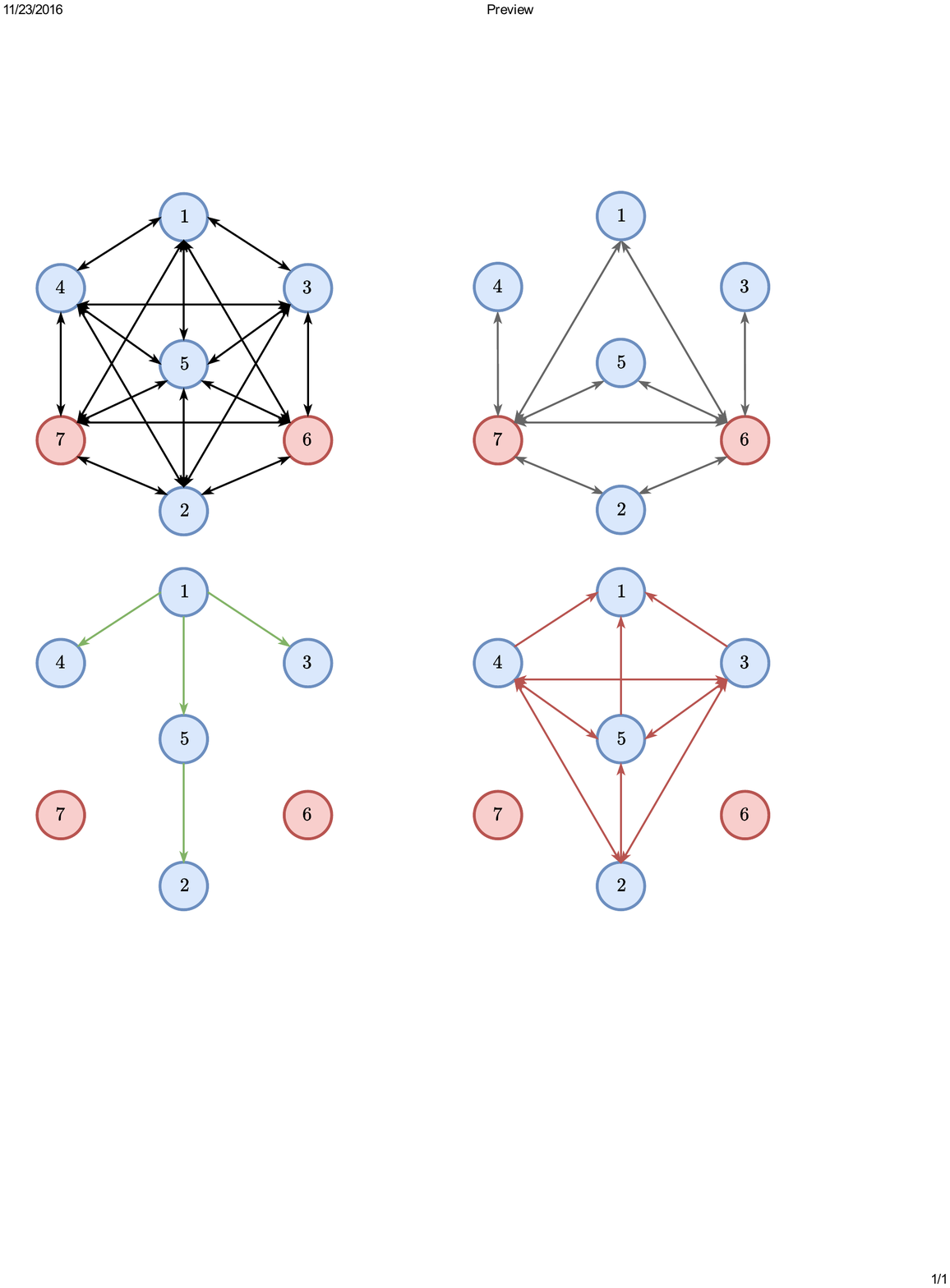}
  \caption{Step-2. Arbitrarily select $\mathbf{G}_{\rm EE}$. }
  \label{Fig:ExSTEE3}
\end{subfigure} \hfill
\begin{subfigure}{.48\textwidth}
  \centering
  \includegraphics[width=0.77\linewidth]{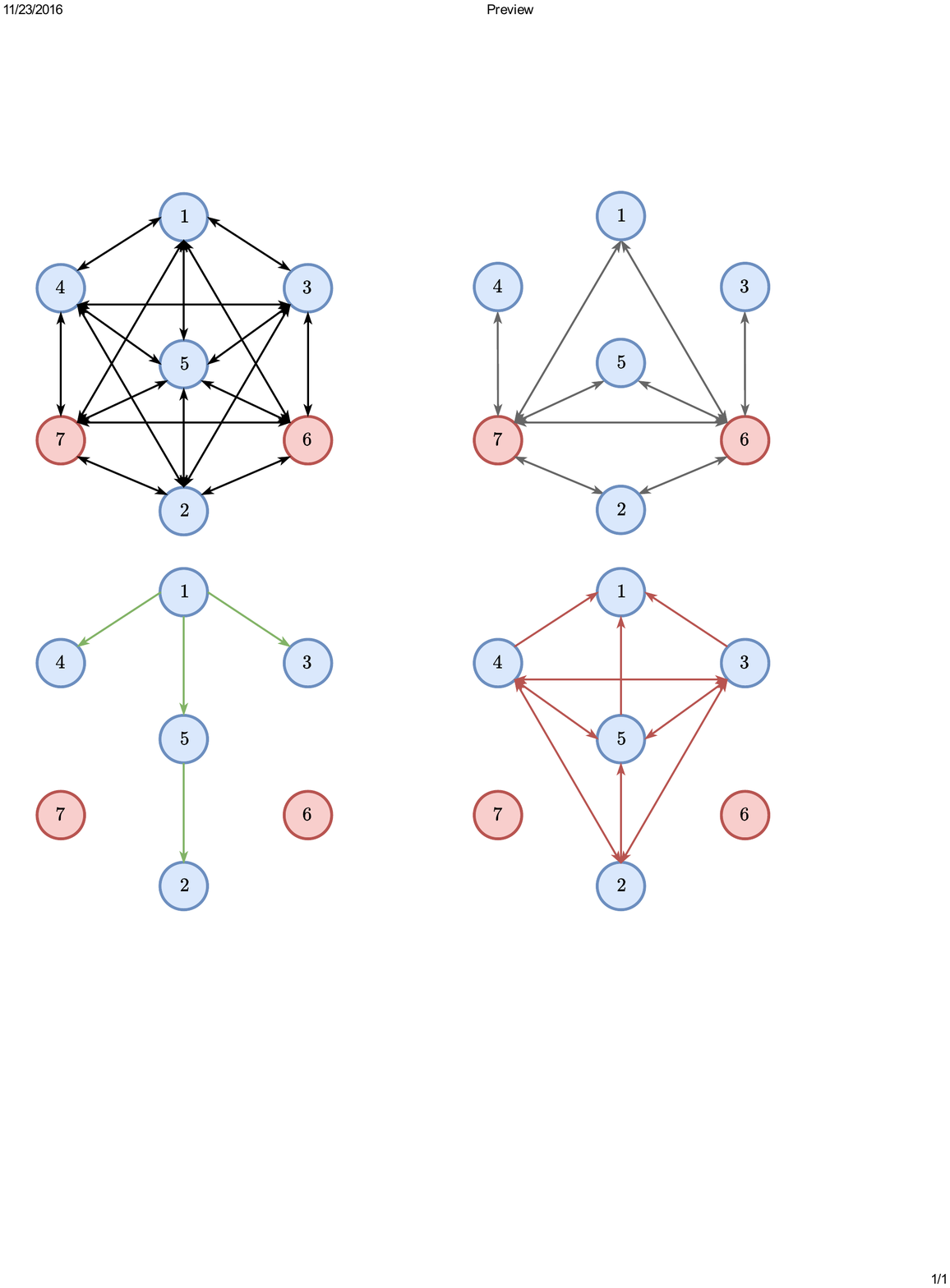}
  \caption{Step-3. Solve for $\mathbf{G}_{\rm ST}$ (Eq.~\ref{Eq:SolutionEq}).}
  \label{Fig:ExSTEE4}
\end{subfigure}
\caption{An example of construction used for proving Theorem~\ref{Th:TPriv-2}. Network has $S=7$ nodes and a coalition with $f=2$ adversaries. The graph topology is $2$-admissible.}
\label{Fig:ExSTEE}
\end{figure}

\begin{remark} [Method for Constructing $\mathbf{G}$]
\label{Ex:STEE}
We present an example for the construction used in the above proof. Let us consider a system of $S=7$ agents communicating under a $2$-admissible topology (see Figure~\ref{Fig:ExSTEE1}). A coalition of two PC adversaries ($\mathcal{A} = \{6, 7\}$, $f=2$) is a part of the system. We can divide the task of constructing $\mathbf{G}$ into three steps - 
\begin{enumerate}
    \item Fix $\mathbf{G}_{\mathscr{A},l}$ and $\mathbf{G}_{l,\mathscr{A}}$ (links incident on adversaries) to be the corresponding entries in $\mathbf{R}$,
    \item Arbitrarily select the functions corresponding to non spanning tree edges ($\tilde{\mathbf{G}}_{\rm EE}$), and
    \item Solve for the functions corresponding to the spanning tree ($\tilde{\mathbf{G}}_{\rm ST}$) using Eq.~\ref{Eq:SolutionEq}.
\end{enumerate}

We first, follow the Step 1 and fix $G_{K,\mathscr{A}} = R_{K,\mathscr{A}}$ and $G_{\mathscr{A},J} = R_{\mathscr{A},J}$ (where $K: \mathscr{A} \in \mathcal{N}_{K}$ and $J \in \mathcal{N}_{\mathscr{A}}$, for all $\mathscr{A} \in \mathcal{A}$). Step 1 follows form the fact that the adversaries observe $R_{\mathscr{A},J}$ and $R_{K,\mathscr{A}}$, and hence they need to be same in both executions. This is followed by substituting the known entries in $\mathbf{G}$ and subtract them from the left hand side as shown in Eq.~\ref{Eq:ProofEq2}. This corresponds to the deletion of all incoming and outgoing edges from the adversaries. 
The incidence matrix of this new graph is denoted by $\mathbf{\tilde{B}}$. The edges in the new graph can be decomposed into two groups - a set containing edges that form a spanning tree and a set that contains all other edges. This is seen in Figure~\ref{Fig:ExSTEE3} where the red edges are all the remaining links (incidence matrix, $\mathbf{\tilde{B}}_{\rm EE}$); and Figure~\ref{Fig:ExSTEE4} where the green edges form a spanning tree (incidence matrix, $\mathbf{\tilde{B}}_{\rm ST}$) with Agent 1 as the root and all other ``good" agents as its leaves (agents 2, 3, 4, 5). We now arbitrarily select $\tilde{\mathbf{G}}_{\rm EE}$ (Step-2) followed by solving for $\tilde{\mathbf{G}}_{\rm ST}$ using Eq.~\ref{Eq:SolutionEq} (Step-3). This completes construction of $\mathbf{G}$.
\end{remark}

\begin{example}
The privacy preserving optimization algorithm completely obfuscates the true objective functions $f_i(x)$ (topology as shown in Figure~\ref{Fig:F-0}). We can have, for arbitrarily large number of problems, executions that are exactly same. For instance consider three agents with one adversary (Agent 1). We show that for two different sets of objective functions $\{ f_1(x),f_2(x),f_3(x) \}$ and $\{h_1(x), h_2(x), h_3(x)\}$ \footnote{The two problems are restricted to have, $\sum_{i=1}^3 f_i(x) = \sum_{i=1}^3 h_i(x) = f(x)$.}, there exists two sets of arbitrarily functions, $R_{I,J}(x)$, such that $\hat{f}_1(x) = \hat{h}_1(x), \hat{f}_2(x) = \hat{h}_2(x)$ and $\hat{f}_3(x) = \hat{h}_3(x)$, while the adversaries observe the same execution. Column 1 of Table~\ref{Tab:2Exec} shows the objective functions for two distributed optimization problems. Now we consider two sets of arbitrarily functions, $R_{I,J}(x)$, as seen in Column 2. Observe that the obfuscated objective functions for both problems are exactly the same (Column 3), giving rise to identical executions (iterations) and the quantities observed by the adversary ($f_1(x), \hat{f}_1(x), \hat{f}_2(x), \hat{f}_3(x), R_{1,2}(x), R_{1,3}(x), R_{2,1}(x)$ and $R_{3,1}(x)$) are the same for both problems. This results in the inability of the adversary to distinguish between the two different optimization problems. 
\begin{table}[!h]
\centering
\begin{tabular}{ |l|l|l| }
\hline
\multicolumn{3}{ |c| }{\textbf{Problem 1}} \\
\hline
$\bm{f_i(x)}$ & $\bm{R_{I,J}(x)}$ & $\bm{\hat{f}_i(x)}$ \\ \hline
\multirow{2}{*}{$f_1(x) = x^2$} & $R_{1,2}(x) = 3x + 9 x^2 + x^3 + 2x^4$ &  \multirow{2}{*}{$\hat{f}_1(x) = -3x -4x^2 -4x^3 + 2x^4$}  \\
 & $R_{1,3}(x) = 5x + 1x^2 + 7x^3 + 6x^4$& \\ \hline
\multirow{2}{*}{$f_2(x) = x^2 + x^4$} & $R_{2,1}(x) = 5x^2+ 3x^3 + 6x^4$ &  \multirow{2}{*}{$\hat{f}_2(x) = 10x + 4x^2 -7x^3 -4x^4$}  \\
 & $R_{2,3}(x) = 4x^2 + 5x^3 + 7x^4$& \\ \hline
 \multirow{2}{*}{$f_3(x) = x^4$} & $R_{3,1}(x) = 5x + 1x^3 + 4x^4$ &  \multirow{2}{*}{$\hat{f}_3(x) = -7x + 2x^2 + 11x^3 + 4x^4$}  \\
 & $R_{3,2}(x) = 7x + 3x^2 + 6x^4 $& \\ \hline 
 \hline
\multicolumn{3}{ |c| }{\textbf{Problem 2}} \\
\hline
$\bm{f_i(x)}$ & $\bm{R_{I,J}(x)}$ & $\bm{\hat{f}_i(x)}$ \\ \hline
\multirow{2}{*}{$h_1(x) = x^2$} & $R_{1,2}(x) = 3x + 9 x^2 + x^3 + 2x^4$ &  \multirow{2}{*}{$\hat{h}_1(x) = -3x -4x^2 -4x^3 + 2x^4$}  \\
 & $R_{1,3}(x) = 5x + 1x^2 + 7x^3 + 6x^4$& \\ \hline
\multirow{2}{*}{$h_2(x) = 3x^2 + 3x^4$} & $R_{2,1}(x) = 5x^2+ 3x^3 + 6x^4$ &  \multirow{2}{*}{$\hat{h}_2(x) = 10x + 4x^2 -7x^3 -4x^4$}  \\
 & $R_{2,3}(x) = -17x +10x^2 +12 x^3 + 13x^4$& \\ \hline
\multirow{2}{*}{$h_3(x) = -2x^2 -x^4$} & $R_{3,1}(x) = 5x + 1x^3 + 4x^4$ &  \multirow{2}{*}{$\hat{h}_3(x) = -7x + 2x^2 + 11x^3 + 4x^4$}  \\
 & $R_{3,2}(x) = -10x + 7x^2 + 7x^3 + 10x^4 $& \\\hline
\end{tabular}
\caption{Table describes two distributed optimization problems that will have identical executions.}
\label{Tab:2Exec}
\end{table}

\end{example} 

Theorem~\ref{Th:TPriv-2} states that none of the private objective functions can be meaningfully recovered (Definition~\ref{Def:Privacy}). However, we can further talk about the privacy of the overall aggregate function, $f(x)$, and the aggregate of objective functions belonging to ``good" agents, $f_{\mathcal{A}'}$, under relaxed requirements (like allowing ambiguity in the constant term). We also allow the coalition to run numerical schemes like polynomial regression on observed quantities. With these added assumptions we can claim that a coalition may uncover (not necessarily) the overall aggregate and the aggregate of objective functions private to ``good" agents. 

\begin{remark} [Privacy of $f(x)$] \label{Rem:PrivF(x)}
The invariance of aggregate is described in Remark~\ref{Rem:InvAgg}. Clearly, if the coalition can extract $\hat{f}_i(x)$ via a numerical scheme (like polynomial regression) albeit with ambiguity in the constant term, then by simply adding all $\hat{f}_i(x)$ together, the coalition can estimate $f(x)$ with uncertainty in the constant term.
\end{remark}

\begin{remark} [Privacy of $f_{\mathcal{A}'}(x)$]
We know from Remark~\ref{Rem:PrivF(x)} that the coalition can estimate $f(x)$ and the coalition already has access to the objective functions of the adversaries. Hence, the coalition can estimate $f_{\mathcal{A}'}(x)$ using the relation $f_{\mathcal{A}'}(x) = \sum_{i \cancel{\in} \mathcal{A}} f_i(x) = f(x) - \sum_{i \in \mathcal{A}} f_i(x)$. The coalition can estimate $f_{\mathcal{A}'}(x)$ with ambiguity in constant term.
\end{remark}

\subsection{Loss of Privacy}
Theorem~\ref{Th:TPriv-2} clearly characterizes the conditions on communication topology for privacy preservation. The $f$-admissibility condition is both necessary and sufficient for privacy of Algorithm~\ref{Algo:PrivDistOptNCFun}. In this section we characterize certain topologies that (do not satisfy $f$-admissibility condition) and are prone to specific types of attacks.
\begin{itemize}
    \item Loss of individual privacy - If the degree of one or more nodes is less than $f+1$, then those agents are susceptible to privacy loss and a coalition may uncover their objective functions. Figure~\ref{Fig:FailScene1} shows a system of five agents ($S=5$) with one adversary ($\mathcal{A} = \{3\}$, $f = 1$) and its communication topology is not $1$-admissible. Agent 4 has a degree deficiency. Since the adversary can observe all the arbitrary functions ($R_{3,4}$, $R_{4,3}$) used by Agent 4 for obfuscating its objective function, the objective function of Agent 4 ($f_4(x)$) is vulnerable to privacy loss from the adversary.  
    \item Loss of group privacy - If the vertex is only $(f-1)$-admissible and the coalition is a of size $f$, then there exists cases where the coalition can effectively uncover sum of objective functions of a set of agents (group). Figure~\ref{Fig:FailScene2} shows a system with six agents ($S=6$) with one adversary ($\mathcal{A}= \{3\}$, $f = 1$) under a $0$-admissible topology. The objective function sums, $f_{\{1,2\}}(x) = f_1(x) + f_2(x)$ and $f_{\{4,5,6\}}(x) =  f_4(x)+f_5(x)+f_6(x)$ are susceptible to privacy loss from the adversary.
\end{itemize}

\begin{figure}[h]
\begin{subfigure}{.45\textwidth}
  \centering
  \includegraphics[width=0.93\linewidth]{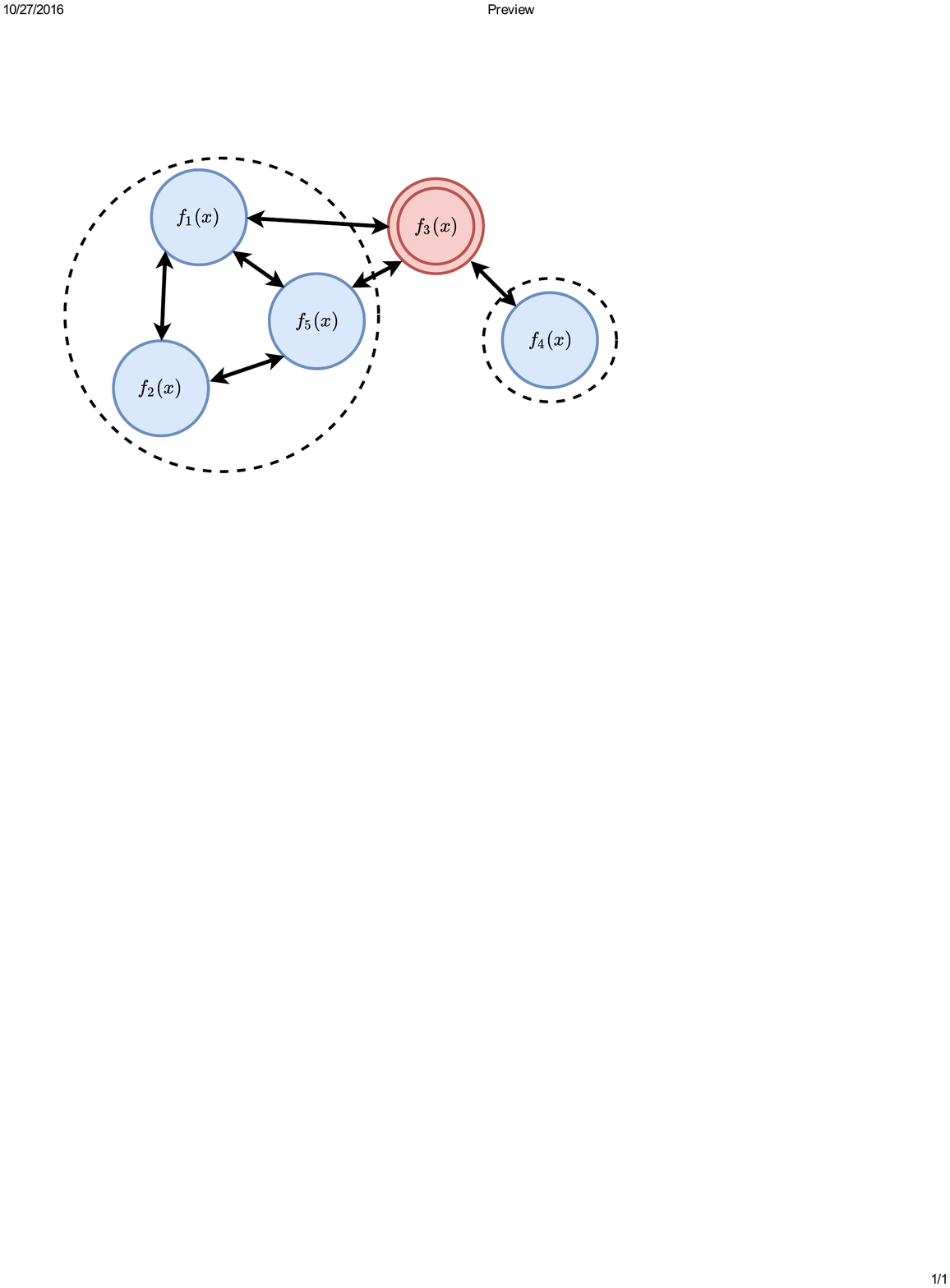}
  \caption{Loss of individual privacy.}
  \label{Fig:FailScene1}
\end{subfigure} \hfill
\begin{subfigure}{.5\textwidth}
  \centering
  \includegraphics[width=1.05\linewidth]{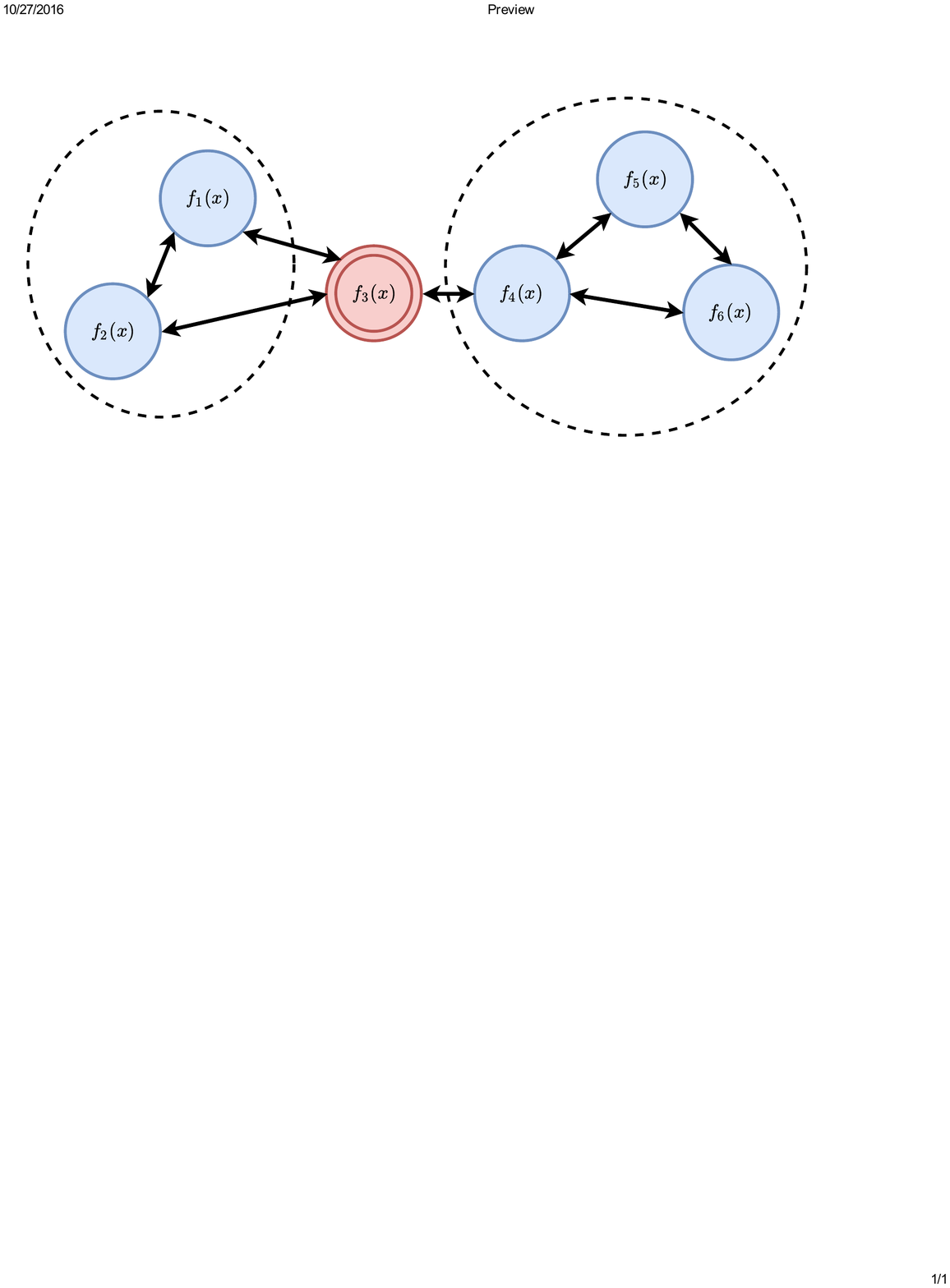}
  \caption{Loss of group privacy.}
  \label{Fig:FailScene2}
\end{subfigure} 
\caption{Loss of privacy - failure scenarios. The adversary is shown in red with concentric circles.}
\label{Fig:FailScene}
\end{figure}

\section{Simulation Results} \label{Sec:Results}

In this section, we show that the privacy preserving optimization algorithm in Section~\ref{Sec:Algo} solves the learning problem correctly. We consider a network of $S = 3$ connected agents as shown in Figure~\ref{Fig:F-0}. Each agent is endowed with a private objective function, given by $f_1(x) = x^2$, $f_2(x) = x^2 + x^4$ and $f_3(x) = x^4$. The system tries to optimize the additive aggregate, $f(x) = f_1(x) + f_2(x) + f_3(x) = 2(x^2 + x^4)$. The aggregate function $f(x)$ achieves optimum (minimum) at $x^* = 0$. Agents implement the privacy preserving distributed optimization protocol presented as Algorithm~\ref{Algo:PrivDistOptNCFun}. The agents share their states (model parameters) with neighbors. States received from other agents are fused using a doubly-stochastic matrix $B_k$. The learning step is given by a monotonically decreasing, non-summable but square summable sequence, $\alpha_k = 1/(k + 0.0001)$.
$$B_k = \begin{bmatrix}
0.5 & 0.25 & 0.25 \\
0.25 & 0.5 & 0.25 \\
0.25 & 0.25 & 0.5
\end{bmatrix}.$$

\begin{figure}[!h]
\begin{subfigure}{.49\textwidth}
  \centering
  \includegraphics[width=1.1\linewidth]{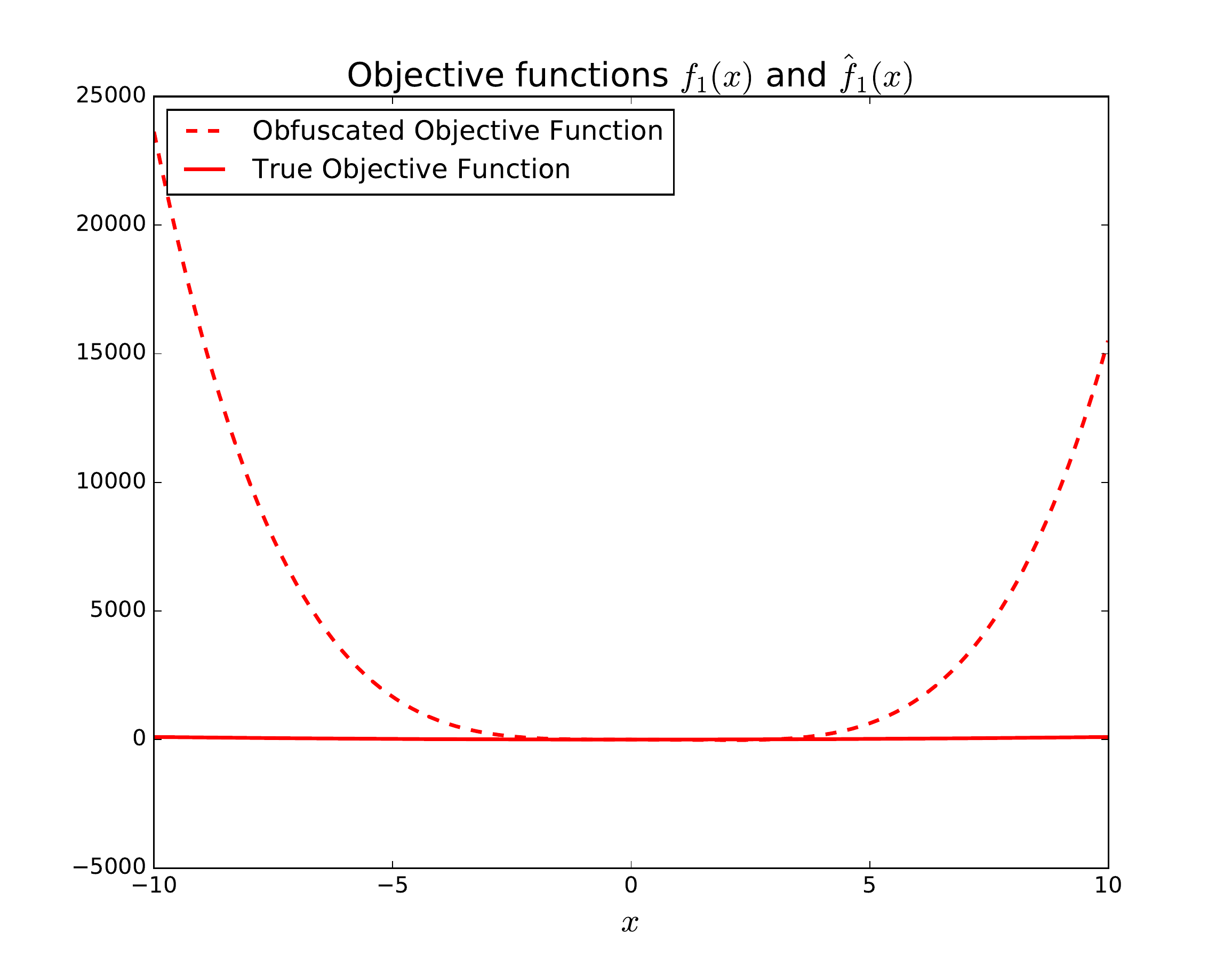}
  \caption{$f_1(x)$ and $\hat{f}_1(x)$}
  \label{Fig:Sim7}
\end{subfigure} \hfill
\begin{subfigure}{.49\textwidth}
  \centering
  \includegraphics[width=1.1\linewidth]{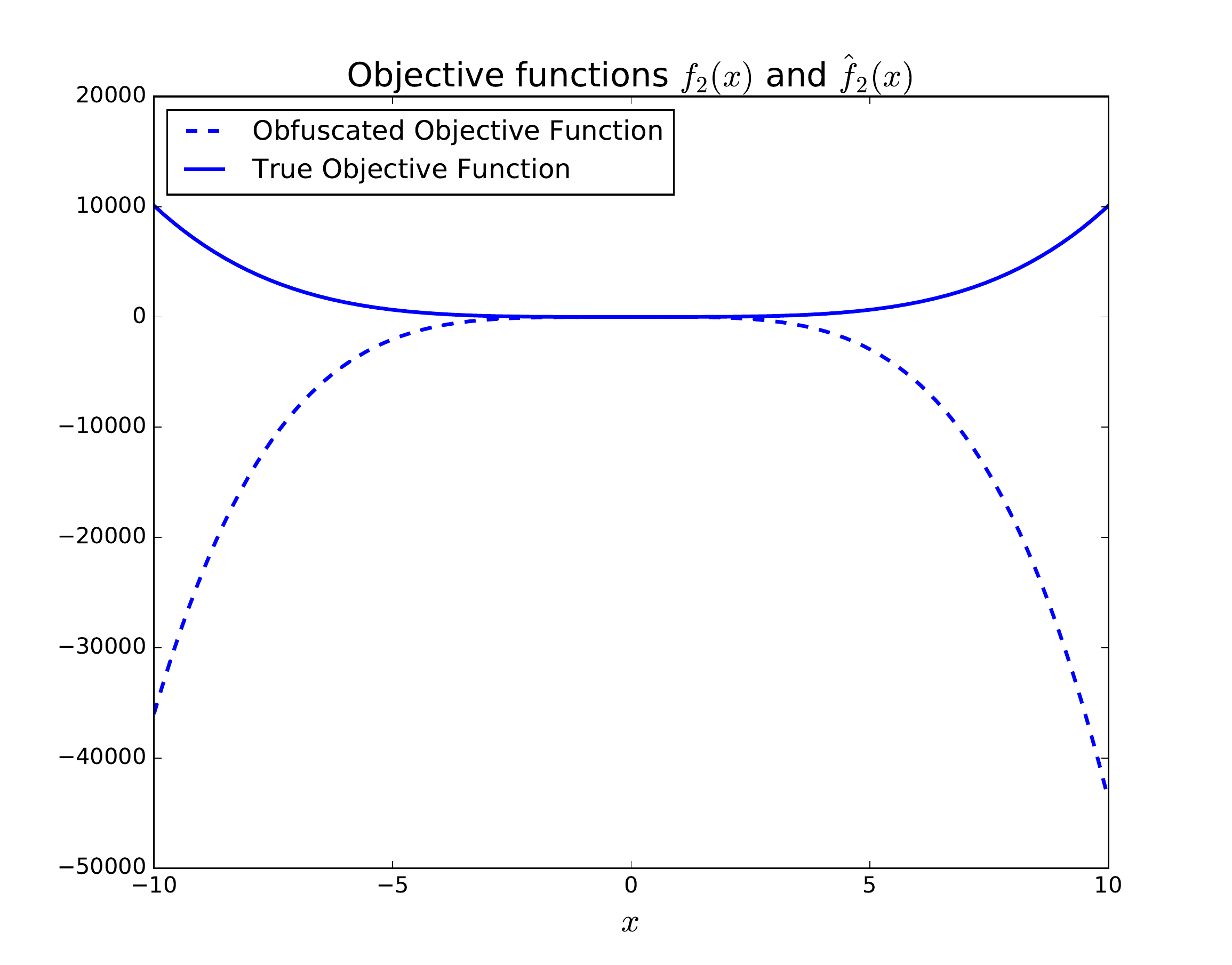}
  \caption{$f_2(x)$ and $\hat{f}_2(x)$}
  \label{Fig:Sim8}
\end{subfigure} \\
\centering
\begin{subfigure}{.49\textwidth}
  \centering
  \includegraphics[width=1.1\linewidth]{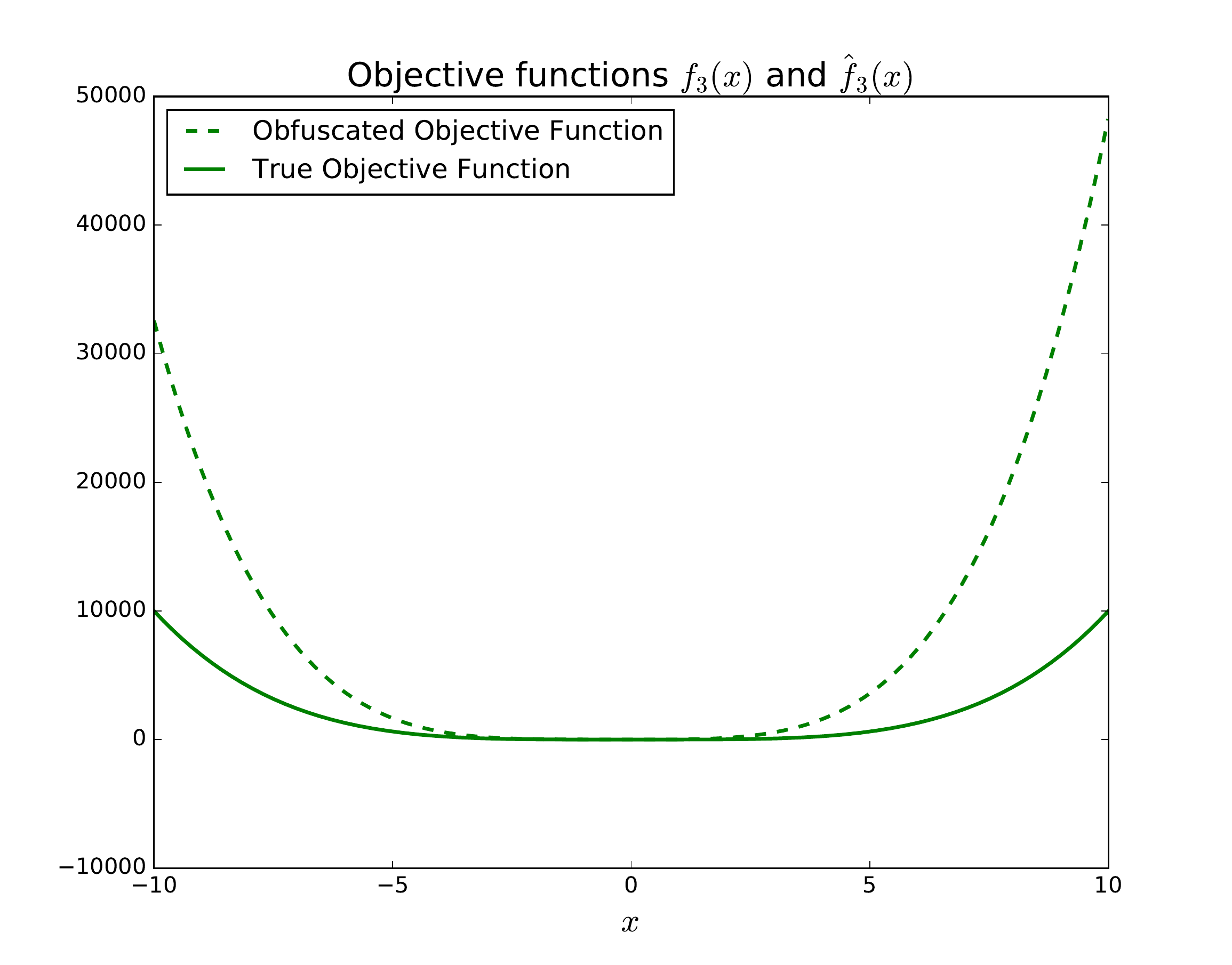}
  \caption{$f_3(x)$ and $\hat{f}_3(x)$}
  \label{Fig:Sim9}
\end{subfigure} 
\caption{Objective Function and Obfuscated Objective Functions.}
\label{Fig:Simulation3}
\end{figure}

We consider arbitrarily generated function $R_{I,J}(x)$ in Table~\ref{Tab:2Exec} (Problem 1). The obfuscated objective functions are given by (Column 3), 
\begin{align*}
\hat{f}_1(x) &= f_1(x) + R_{2,1}(x) + R_{3,1}(x) - R_{1,2}(x) - R_{1,3}(x) = -3x -4x^2 -4x^3 + 2x^4, \\
\hat{f}_2(x) &= f_2(x) + R_{1,2}(x) + R_{3,2}(x) - R_{2,1}(x) - R_{2,3}(x) = 10x + 1x^2 -4x^3 -4x^4, \text{ and } \\
\hat{f}_3(x) &= f_3(x) + R_{1,3}(x) + R_{2,3}(x) - R_{3,1}(x) - R_{3,2}(x) = -7x + 5x^2+8x^3+4x^4.
\end{align*}

It can be easily verified that the aggregate function is an invariant (Remark~\ref{Rem:InvAgg}), $f(x) = \sum f_i(x) = \sum \hat{f}_i(x) = 2(x^2 + x^4)$. Note, that transformed problem falls under the ``distributed optimization of a convex aggregate of non-convex functions" framework, presented in \cite{gade16convsum}. Figure~\ref{Fig:Sim7}, \ref{Fig:Sim8}, and \ref{Fig:Sim9} shows the effect of obfuscation using arbitrary random functions $R_{I,J}(x)$ on the private objective function $f_i(x)$. As seen in Figure~\ref{Fig:Simulation3}, the private objective function and obfuscated functions are different from each other. 

\begin{figure}[!h]
\begin{subfigure}{.48\textwidth}
  \centering
  \includegraphics[width=1.1\linewidth]{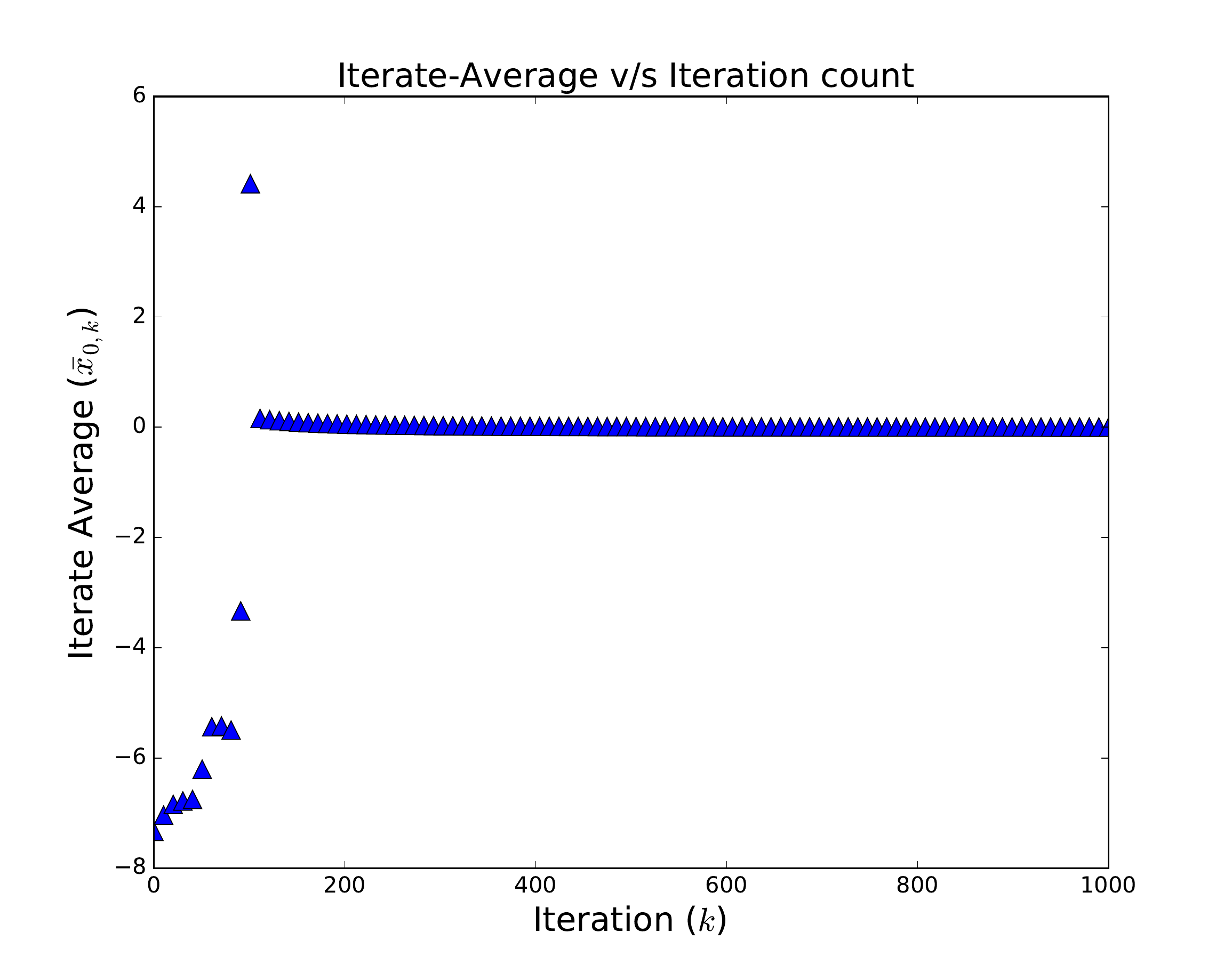}
  \caption{Convergence of iterate-average.}
  \label{Fig:Sim1}
\end{subfigure} \hfill
\begin{subfigure}{.48\textwidth}
  \centering
  \includegraphics[width=1.1\linewidth]{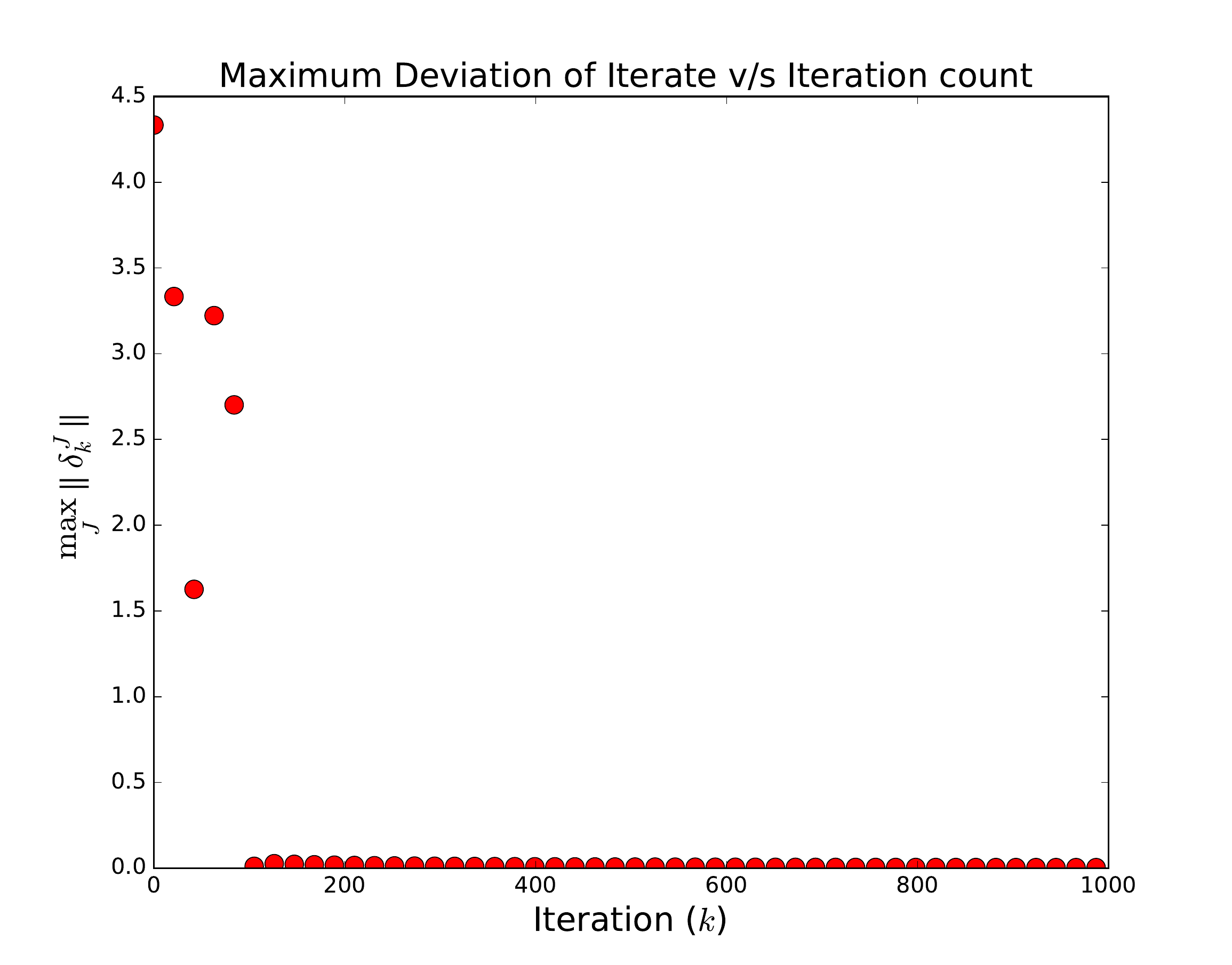}
  \caption{Maximum deviation of iterate from iterate-average.}
  \label{Fig:Sim2}
\end{subfigure} \\
\begin{subfigure}{.48\textwidth}
  \centering
  \includegraphics[width=1.1\linewidth]{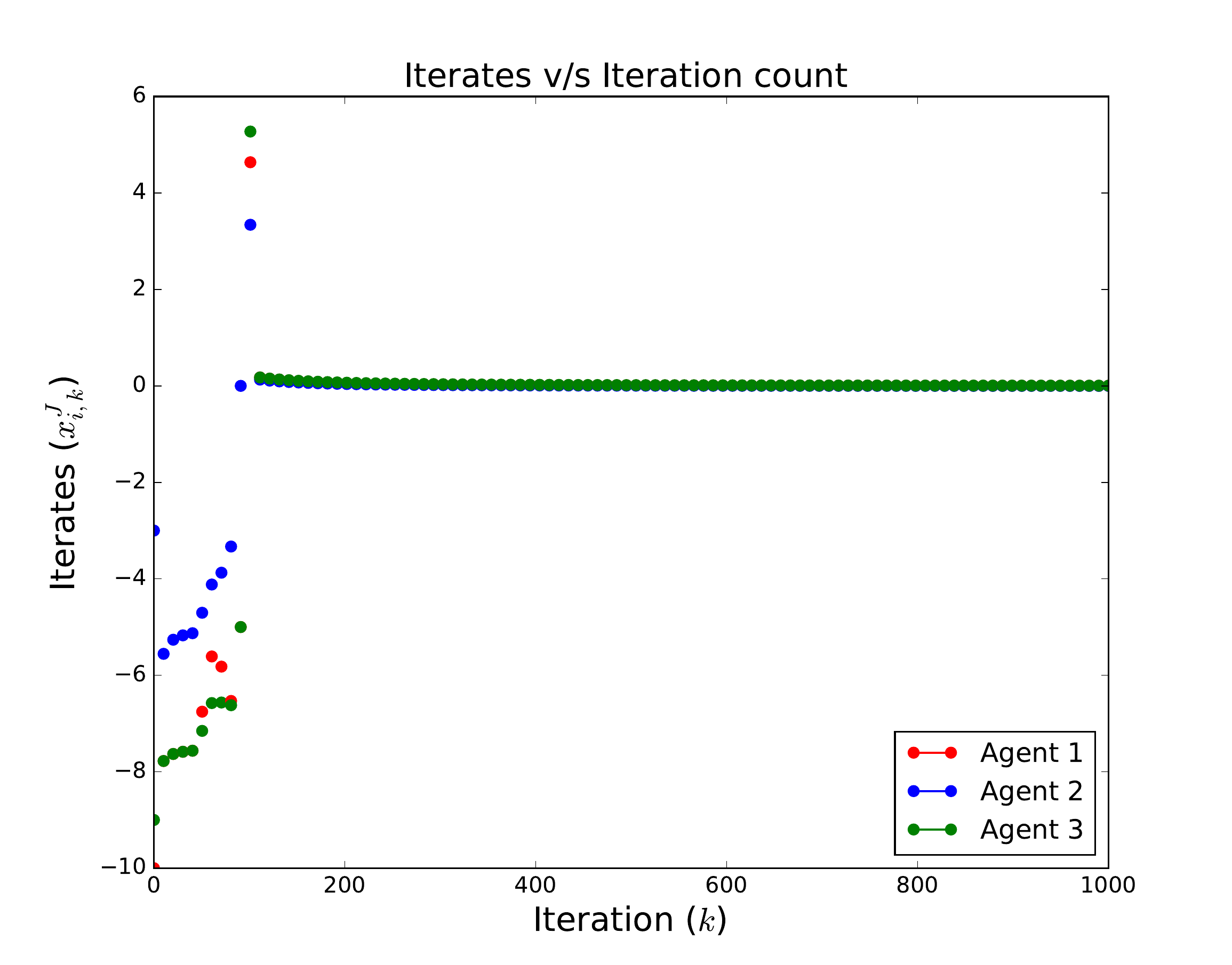}
  \caption{Iterate at Agents 1, 2 and 3.}
  \label{Fig:Sim3}
\end{subfigure} \hfill
\begin{subfigure}{.48\textwidth}
  \centering
  \includegraphics[width=1.1\linewidth]{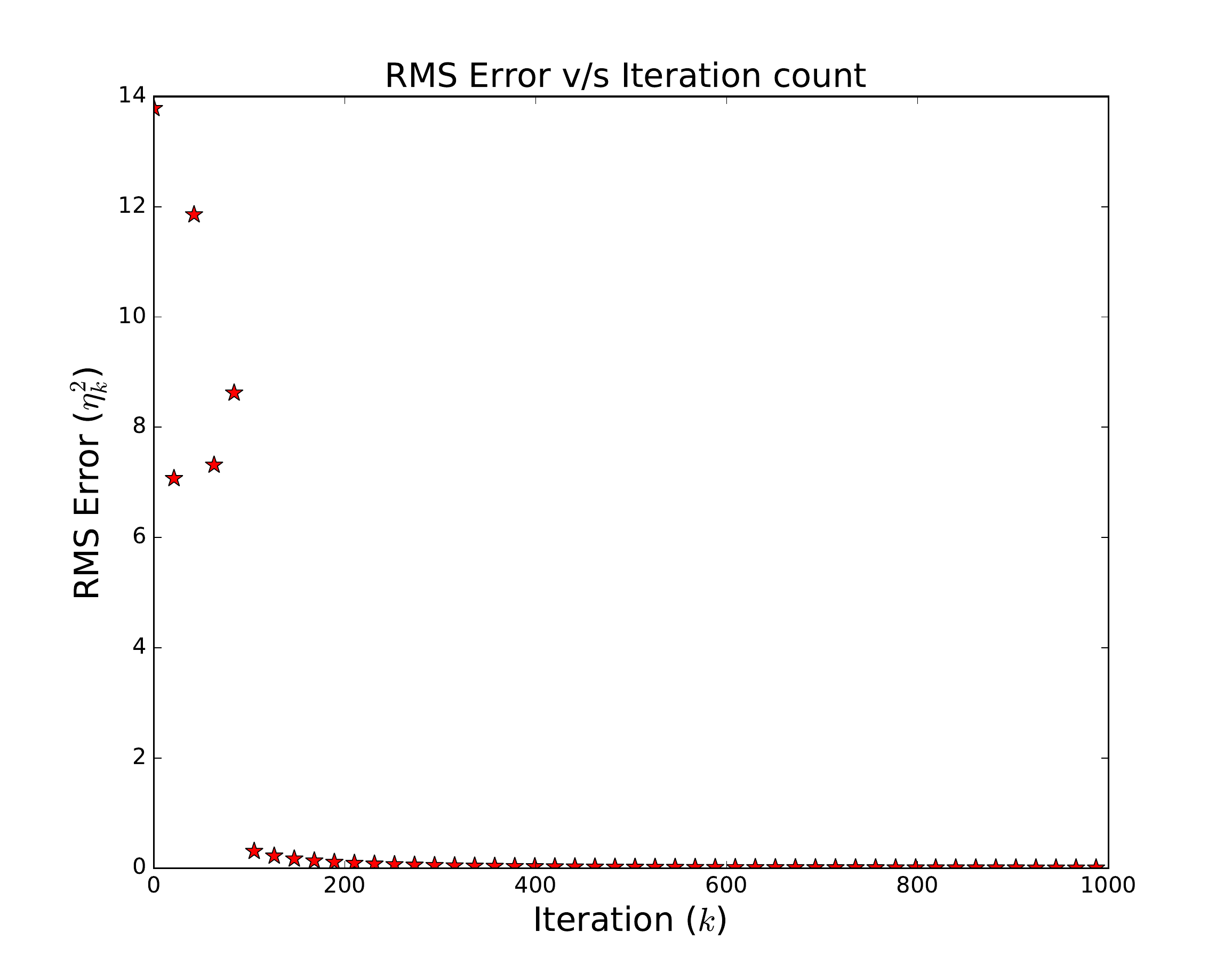}
  \caption{RMS Error $\eta^2_k$}
  \label{Fig:Sim4}
\end{subfigure}
\caption{Correctness and Convergence of Privacy Preserving Algorithm~\ref{Algo:PrivDistOptNCFun}.}
\label{Fig:Simulation}
\end{figure}

The iterate-average ($\bar{x}_k = (x^1_k + x^2_k + x^3_k)/3$) is plotted in Figure~\ref{Fig:Sim1}. The iterate average converges to the optimal point of $f(x)$. Figure~\ref{Fig:Sim2} plots the maximum deviation ($\max \delta_k = \max_J \|x^J_k - \bar{x}_k\|$) of an iterate from the iterate average. The reduction in maximum deviation shows that all the agents converge to the iterate average. Figure~\ref{Fig:Sim3} plots the iterates of each agent with respect to the iteration count. Figure~\ref{Fig:Sim4} decrease in the RMS error ($\eta^2_k = \left[(x^1_k - \bar{x}_k)^2 + (x^2_k - \bar{x}_k)^2 + (x^3_k - \bar{x}_k)^2 \right]$).

\begin{figure}[t]
\begin{subfigure}{.5\textwidth}
  \centering
  \includegraphics[width=1.1\linewidth]{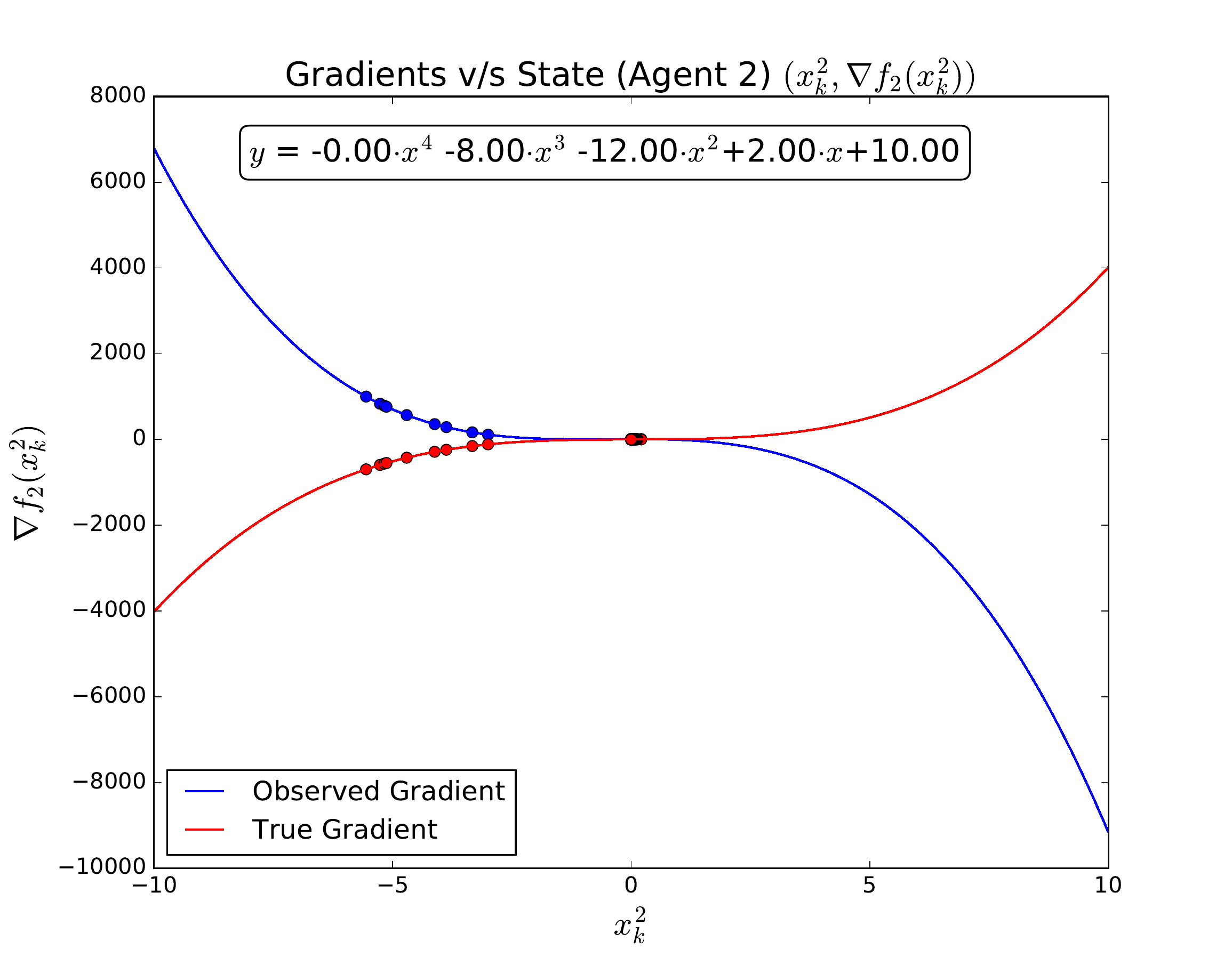}
  \caption{$\nabla f_2(x)$ v/s $\nabla \hat{f}_2(x)$}
  \label{Fig:Sim5}
\end{subfigure} \hfill
\begin{subfigure}{.5\textwidth}
  \centering
  \includegraphics[width=1.1\linewidth]{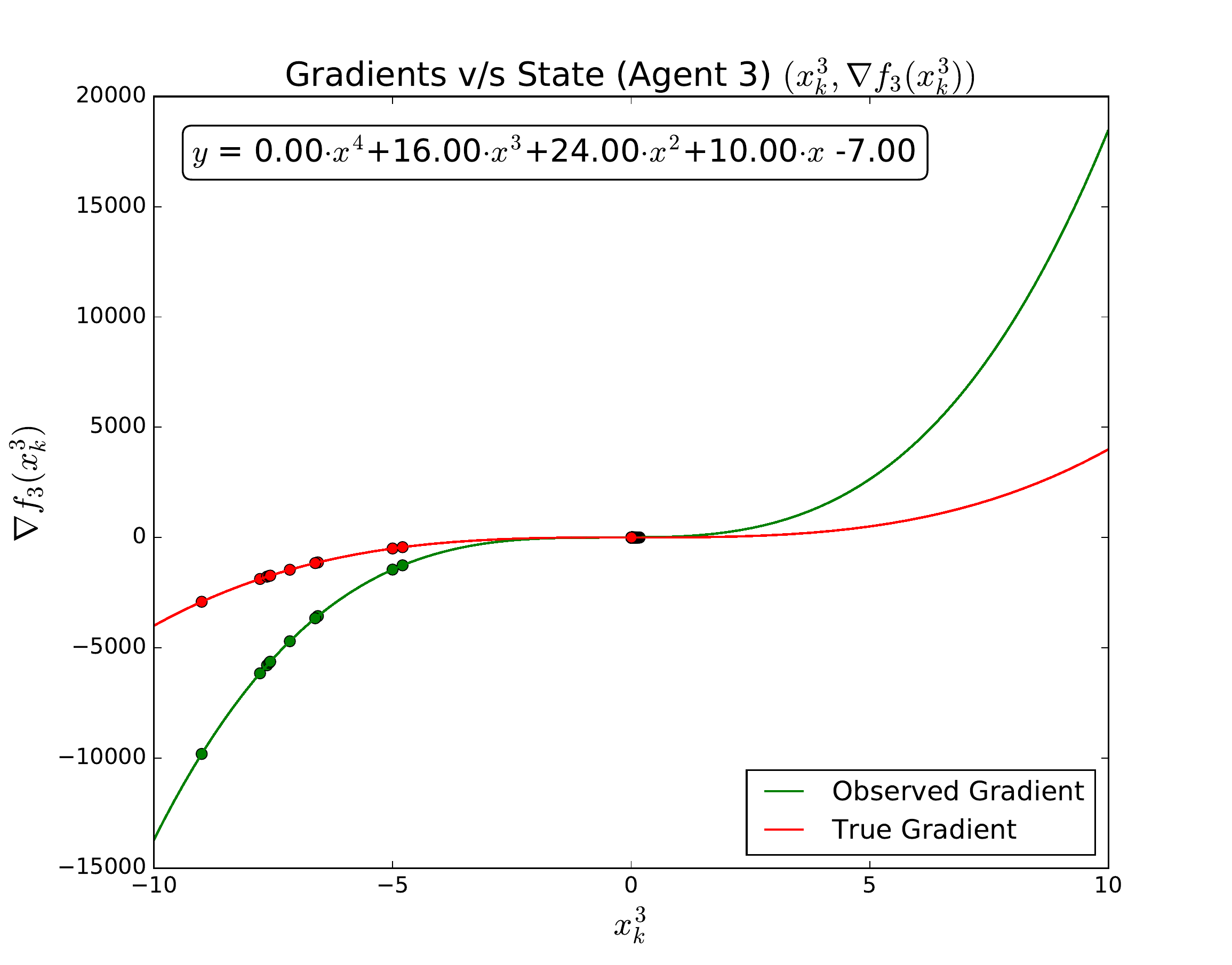}
  \caption{$\nabla f_3(x)$ v/s $\nabla \hat{f}_3(x)$}
  \label{Fig:Sim6}
\end{subfigure}%
\caption{True and Observed Gradients (Least Squares Polynomial Fit $<\texttt{numpy.polyfit}>$)}
\label{Fig:Simulation2}
\end{figure}

Note that the observed gradient is drastically different from the true gradient. This ensures that even a strong PC adversary can estimate only the obfuscated function. The private objective function remains hidden and an adversary cannot uncover any private objective function. 

\section{Conclusion}
In this report, we present a privacy preserving algorithm for distributed optimization over networks. We show that a secure function sharing approach can be used to obfuscate private objective functions and protect them from adversarial estimation. We define privacy of individual objective functions, in the context of distributed optimization problems, as the inability of an adversary to reduce ambiguity between any arbitrarily guessed (feasible) candidate function. We use vertex connectivity ($> f$) to characterize graphs that allow Algorithm~\ref{Algo:PrivDistOptNCFun} to be privacy preserving against a coalition of $f$ adversaries, in the sense that any arbitrary guess of the objective functions made by the coalition is equally consistent. We prove that the function sharing strategy for privacy preserving distributed optimization preserves privacy under an $f$-admissible graph topology and show correctness, convergence and privacy claims through an example and simulations.

{\small
\bibliography{Central.bib}
\bibliographystyle{ieeetr}}
\end{document}